\documentclass[aps,pra,longbibliography,superscriptaddress,twocolumn]{revtex4-1}

\usepackage[T1]{fontenc}
\usepackage[utf8]{inputenc}

\usepackage{float}
\usepackage{graphicx}
\usepackage{epsfig,epstopdf}
\usepackage[table]{xcolor}
\usepackage{comment}

\usepackage{mathtools}

\newcommand{\CC}{\mathbb{C}}

\usepackage[english]{babel}

\PassOptionsToPackage{hyphens}{url}
\usepackage[hyphenbreaks]{breakurl}

\usepackage{times}

\usepackage{amsmath}
\usepackage{bbm}
\usepackage{amsfonts}
\usepackage{amssymb}
\usepackage{amsthm}
\usepackage{mathbbol}
\usepackage{mathtools}

\usepackage{float}
\usepackage{graphicx}
\usepackage{epsfig,epstopdf}
\usepackage{xcolor}
\usepackage{tikz}
\usepackage{pgffor}

\usepackage{times}

\usepackage{enumerate}
\usepackage{enumitem}

\usepackage{pgfplots}
\usetikzlibrary{pgfplots.groupplots}
\pgfplotsset{compat=1.11}
\usepgfplotslibrary{fillbetween}

\usepackage{tikz}
\usetikzlibrary{
  shapes,
  shapes.geometric,
	trees,
	matrix,
  positioning,
    pgfplots.groupplots,
  }
\newcommand{\ipic}[3][-0.5]{\raisebox{#1\height}{\scalebox{#3}{\includegraphics{#2}}}}

\PassOptionsToPackage{pdftex,hyperfootnotes=false,pdfpagelabels}{hyperref}
\usepackage{hyperref}
\hypersetup{
    colorlinks=true, linktocpage=true, pdfstartpage=3, pdfstartview=FitV,
    breaklinks=true, pdfpagemode=UseNone, pageanchor=true, pdfpagemode=UseOutlines,
    plainpages=false, bookmarksnumbered, bookmarksopen=true, bookmarksopenlevel=1,
    hypertexnames=true, pdfhighlight=/O,
    urlcolor=blue, linkcolor=blue, citecolor=black,
}

\makeatletter 
\makeatother

\newtheorem{theorem}{Theorem}

\newtheorem{definition}{Definition}
\newtheorem{lemma}{Lemma}

\DeclareMathOperator{\tr}{Tr}

\definecolor{christian}{rgb}{0,.4,1}

\definecolor{jens}{rgb}{0,0,0}

\newcommand{\new}[1]{{\color{jens}#1}}

\definecolor{roberto}{rgb}{0,.4,1}

\definecolor{juan}{rgb}{.7,.1,0}

\definecolor{dominik}{rgb}{0.4,.0,0.6}

\newcommand{\fu}{Dahlem Center for Complex Quantum Systems, Freie Universit{\"a}t Berlin, 14195 Berlin, Germany}
\newcommand{\tii}{Quantum Research Centre, Technology Innovation Institute (TII), Abu Dhabi}
\newcommand{\hzb}{Helmholtz-Zentrum Berlin f{\"u}r Materialien und Energie, 14109 Berlin, Germany}

\begin{document}

\title{Emergent statistical mechanics
from properties of\\ disordered random matrix product states}

\author{Jonas Haferkamp}
\affiliation{\fu}
\affiliation{\hzb}
\author{Christian Bertoni}
\affiliation{\fu}
\author{Ingo Roth}
\affiliation{\fu}
\affiliation{\tii}
\author{Jens Eisert}
\affiliation{\fu}
\affiliation{\hzb}

\begin{abstract}
The study of generic properties of quantum states has led to an abundance of insightful results. A meaningful set of states that can be efficiently prepared in experiments are ground states of gapped local Hamiltonians, which are well approximated by matrix product states. In this work, we \new{introduce a picture of generic states within the trivial phase of matter with respect to their non-equilibrium and entropic properties: We do so by rigorously exploring non-translation-invariant matrix product states drawn from a local i.i.d.\ Haar-measure.}
 We arrive at these results by 
 \new{exploiting} techniques for computing moments of random unitary matrices and by exploiting a  mapping to partition functions of classical statistical models, a method that has lead to valuable insights on local random quantum circuits.
 Specifically, we prove that such disordered random matrix product states equilibrate exponentially well with overwhelming probability under the time evolution of  Hamiltonians featuring a non-degenerate spectrum. Moreover, we prove two results about the entanglement R\'{e}nyi entropy: The entropy with respect to sufficiently disconnected subsystems is generically extensive in the system-size, and for small connected systems the entropy is almost maximal for sufficiently large bond dimensions.
\end{abstract}

\maketitle

The application of random matrix theory to the study of interacting quantum many-body systems has proven to be a particularly fruitful endeavour, in fact, in various readings. This includes countless applications in condensed matter physics, allowing to predict a wide range of topics ranging from universal conductance fluctuations, weak localization or coherent back-scattering \cite{RMT}. 
In a mindset closer to notions of quantum information theory, it has been shown that for uniformly drawn states, such ideas lead to a principle of maximum entropy~\cite{sanchez1995simple,hayden2004randomizing,hayden2006aspects}, an insight that has implications in the context of quantum computing~\cite{gross2009most}.
 That said, from an operational perspective, quantum states that are uniformly random --- in the sense that they are drawn from a global invariant measure --- are not particularly natural in 
many contexts.
 From a physical perspective, such random states do not respect locality 
present in most naturally occurring systems.
 What is more common and natural, in contrast, are quantum states that emerge from ground states of gapped Hamiltonians.
  Indeed, in common experiments, often good approximations of ground states
of local Hamiltonian characterized by finite-ranged interactions can be feasibly prepared, by means of cooling procedures or by resorting to suitable loading 
procedures. A key question that arises, therefore, is to what extent one can 
expect such ground states to exhibit the same or similar properties as (Haar-random) 
uniformly chosen ones. 

This question can be interpreted in several readings.
A particularly important one \new{from the perspective
of out-of-equilibrium physics} is, specifically, 
to what extent such states would \new{eventually} 
equilibrate in time under the evolution of
general Hamiltonians \cite{1408.5148,PolkovnikovReview,christian_review}.
Equilibration is an important concept in the foundations of quantum statistical 
mechanics and considerations how apparent equilibrium states seem to
emerge under closed system quantum dynamics. 
Equilibration refers to properties
becoming apparently \new{and effectively} stationary, even though the
entire systems would \new{remain to} undergo unitary dynamics \new{generated by some local Hamiltonian}.

In this work, we consider such typical ground states \new{from a fresh
perspective}.
 More precisely, we prove several concentration-type results for so-called  \emph{matrix product states (MPS)}, instances of tensor network states that approximate ground states of
gapped one-dimensional quantum systems well. This class of states hence indeed captures those that can be obtained by cooling local Hamiltonians with a spectral gap.  To pick such random states seems a most meaningful approach to respect natural restrictions
of locality. \new{Since one can readily see such states as being
ground states of disordered parent Hamiltonians \cite{1010.3732}, they can be
viewed as being typical representatives of one-dimensional
quantum phases of matter.}
Indeed, \new{while the significance of random ensembles
of quantum states respecting locality has been appreciated
early on \cite{garnerone2010statistical}, only recently,
first steps have been taken towards a rigorous
understanding of such ensembles, specifically concerning
spectral properties and decays of correlations
of generic MPS
~\cite{gonzales_spectral_2018,Lancien}.}

\new{Specifically, the powerful technical tool we bring in to this kind of study is a framework -- seemingly out of context -- related to mappings to partition functions of a one-dimensional statistical mechanical models. We also develop this picture further. 
Equipped with this technical tool,} we prove that with overwhelming probability, a randomly chosen MPS 
--- drawn according to the i.i.d.\ Haar measure in the physical
and the virtual dimensions of the tensor network state ---
equilibrates exponentially under the time evolution of Hamiltonians with non-degenerate spectrum. Moreover, as a second result, and also motivated by the equilibration of systems exhibiting \emph{many-body localization}
\cite{RevModPhys.91.021001,huse2014phenomenology,1409.1252}, we prove the extensivity of the R\'{e}nyi $2$-entropy with overwhelming probability with respect to equidistant disconnected subsystems.
This result complements recent work obtained for translation-invariant MPS~\cite{rolandi2020extensive}.
A third result is an improved \emph{principle of maximum entropy} 
for disordered random MPS. 
More precisely, we show that the R\'{e}nyi $2$-entropy is almost maximal for small connected subsystems up to errors polynomially small in the bond dimension $D$.
This again complements results for random translation-invariant MPS~\cite{gonzales_spectral_2018,collins2012matrix}.
Another complementary result is~\cite{movassagh2019ergodic,movassagh2020theory}, where formulas for the asymptotic values of various quantum entropies are derived in the setting of ergodic Markov chains.

\new{Again, this substantial progress on studying generic equilibrium
and out-of-equilibrium states of matter are facilitated by a 
technical tool introduced to the context at hand:}
The proofs of the above results are based on a graphical 
calculus, as all of them are obtained via mappings to partition functions of a one-dimensional  statistical mechanical model. These models can be obtained by an application of the \emph{Weingarten calculus}.
In fact, the method we make use of is inspired by recent work on 
\emph{random quantum circuits}. 
Such random quantum circuits have recently prominently
been discussed in the literature, both in the 
context of quantum computing where they are used to show
a quantum advantage over classical algorithms
\cite{neill_blueprint_2017,GoogleSupremacy}, as well as in proxies for
scrambling dynamics 
\cite{chandran2015semiclassical,PhysRevB.98.134204,nahum2018operator,hunter-jones_unitary_2019,zhou2019emergent,ComplexityGrowth,brandao_local_2016,Homeopathy}.
 In that context,
similar mappings have been exploited
\cite{hunter-jones_unitary_2019}.
We showcase \new{here, therefore,} how this mapping can be \new{a very much helpful} 
tool in a conceptually very different context.
We also make use of a generalization of Cauchy-Schwarz inequality to tensor networks without self-contractions~\cite{kliesch_guaranteed_2019}.
In an appendix, we argue that the low-entanglement structure of MPSs prevents them from having generic properties of uniform states in the sense that they will now form an approximate complex projective $2$-design \cite{Designs} in a meaningful sense. 
\new{The upshot of this work is that using a machinery
of mappings to partition functions of a one-dimensional  statistical mechanical model and the Weingarten calculus, a wealth
of out-of-equilibrium and equilibrium properties of 
generic one-dimensional quantum phases of matter can be
rigorously computed.}

\section{Setting}
\subsection{Random matrix product states}

We start by stating the underlying model of random quantum states used throughout this work.
 A very reasonable model of random matrix product states already used in Ref.~\cite{garnerone2010statistical} is the following.
 Consider an arbitrary MPS state vector 
 \cite{quant-ph/0608197,raey} with periodic boundary
conditions and $n$ constituents of local dimension $d$.
Such a state vector can be written as
\begin{equation}
|\psi\rangle =\sum_{i_1,\dots, i_n} \tr\left[A_{i_1}^{(1)} A_{i_2}^{(2)}\dots A_{i_n}^{(n)}\right]|i_1,\dots, i_n\rangle.
\end{equation}
Here, $A^{(i)}$ are \new{complex valued} 
$D\times D$ matrices \new{that specify
the quantum state at hand.}
$D$ is referred to as the \emph{bond dimension},
\new{sometimes also called the tensor train
rank}. In a commonly used graphical calculus, this is represented as
\begin{equation}
\ipic{mps}{0.7}.
\end{equation}
\new{We consider MPSs that can be unitarily embedded} \cite{MPSSurvey,Webs}: Here, one equips each tensor
with another leg and feeds in an arbitrary state vector $|0\rangle\in \CC^d$, to get
\begin{equation}\label{eq:stateUprep}
	\ipic{mps2}{0.7} .
\end{equation}
Each unitary $U^{(0)}, \dots , U^{(n)}\in U(d D)$ can be seen as mapping one $\CC^d\otimes \CC^D$  input system to another  $\CC^d\otimes \CC^D$ 
output system. 
These can be normalized by choosing an appropriate normalization for the boundary vectors. 
We can construct MPS with periodic boundary conditions analogously. 
This motivates a very natural probability measure.

\begin{definition}[Random matrix product state] A random matrix product state (RMPS) of local dimension $d$, system size $n$ and bond dimension $D$ is a state defined by \eqref{eq:stateUprep} with each unitary 
	$U^{(1)},\dots, U^{(n)} \in U(dD)$ drawn i.i.d.\ randomly from the Haar measure. 
	We denote the resulting measure as $\mu_{d,n,D}$. 
\end{definition}
Note that this definition can be regarded as drawing the tensor cores $A^{(i)}$s uniformly from the Stiefel manifold of isometries.  
This probability measure makes a lot of sense: It is a distribution over random disordered, non-translation-invariant quantum states. \new{Once again, each} 
realization will have a \emph{parent Hamiltonian} \cite{1010.3732}, a gapped local Hamiltonian for which the given state is an exact ground state. In this sense, the probability measure discussed here can equally be seen as a probability measure on disordered, random local Hamiltonians. 
It should be noted that this 
probability measure takes values on state vectors that are not exactly normalized. 
It is, however, easy to see that the values of $\langle \psi|\psi\rangle$ are strongly concentrated around unity.
\begin{lemma}[Concentration 
around a unit norm]\label{lemma:boundonnorm}
 \begin{equation}
 \mathrm{Pr}\left(|\langle\psi|\psi\rangle-1|\geq \varepsilon\right)\leq \varepsilon^{-2} d^{-n}.
 \end{equation}
\end{lemma}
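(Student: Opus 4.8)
The plan is to reduce the statement to Chebyshev's inequality, for which it suffices to show $\mathbb{E}[\langle\psi|\psi\rangle]=1$ and $\mathrm{Var}(\langle\psi|\psi\rangle)\le d^{-n}$; then $\mathrm{Pr}(|\langle\psi|\psi\rangle-1|\ge\varepsilon)\le\varepsilon^{-2}\mathrm{Var}(\langle\psi|\psi\rangle)\le\varepsilon^{-2}d^{-n}$. Both moments are most transparent in the transfer-operator picture: writing $A^{(k)}_i := (\bra i\otimes\id_D)\,U^{(k)}\,(\ket 0\otimes\id_D)$ for the MPS tensor at site $k$, the periodic boundary conditions give
\begin{equation}
\langle\psi|\psi\rangle = \tr\!\left[E^{(1)}E^{(2)}\cdots E^{(n)}\right],\qquad E^{(k)} := \sum_{i=1}^{d} A^{(k)}_i\otimes\overline{A^{(k)}_i},
\end{equation}
with each $E^{(k)}$ acting on $\CC^D\otimes\CC^D$ and the trace closing the ket- and bra-virtual loops cyclically.

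For the first moment I would invoke the elementary first-order identity $\mathbb{E}_{U\sim\mathrm{Haar}(dD)}[\,U(\proj 0\otimes\id_D)U^\dagger\,]=\id_{dD}/(dD)$; tracing out the physical factor $\CC^d$ yields $\mathbb{E}[E^{(k)}]=\tfrac1D\proj I$ with $\ket I:=\sum_{a=1}^D\ket{a,a}$. Since the $U^{(k)}$ are i.i.d.\ and $\tfrac1D\proj I$ is a rank-one projector, independence gives $\mathbb{E}[\langle\psi|\psi\rangle]=\tr[(\tfrac1D\proj I)^n]=\tr[\tfrac1D\proj I]=1$.

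The core of the argument is the second moment, $\mathbb{E}[\langle\psi|\psi\rangle^2]=\tr[F^n]$, where by independence and identical distribution of the $U^{(k)}$ the single-site operator $F:=\mathbb{E}[E^{(k)}\otimes E^{(k)}]$ on $(\CC^D)^{\otimes 4}$ is the same for every $k$. Here I would apply the order-two Weingarten formula at each site independently. Summing the contracted physical indices produces a weight $d^{c(\sigma)}$ for the ``output'' permutation $\sigma\in S_2$ (with $c$ the number of cycles), the all-$\ket 0$ input legs contribute trivially regardless of the ``input'' permutation $\tau\in S_2$, and contracting the virtual bonds of the chain produces a weight $D^{c(\tau_k\sigma_{k-1}^{-1})}$ per bond; together with the Weingarten factors $\mathrm{Wg}(\sigma_k\tau_k^{-1},dD)$ this rewrites $\mathbb{E}[\langle\psi|\psi\rangle^2]$ as the partition function of a one-dimensional classical chain with $S_2\times S_2$-valued variables. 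Summing out the $\tau_k$ collapses this to $\tr[M^n]$ for an explicit $2\times2$ transfer matrix $M$. A direct diagonalization then shows that $M$ has the eigenvalue $1$ --- the ``equilibrium'' sector $\sigma_k\equiv e$ --- together with a single other eigenvalue $\lambda_-=d(D^2-1)/(d^2D^2-1)$; since $d^2(D^2-1)\le d^2D^2-1$ this obeys $\lambda_-\le 1/d$. Hence $\mathbb{E}[\langle\psi|\psi\rangle^2]=1+\lambda_-^{\,n}$, so that $\mathrm{Var}(\langle\psi|\psi\rangle)=\lambda_-^{\,n}\le d^{-n}$, which closes the proof.

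The only genuine obstacle is the bookkeeping in the second-moment computation: tracking how the two Weingarten permutations per site act on the physical legs, the trivial input legs and the two virtual legs, and verifying that the resulting $S_2\times S_2$ chain really does reduce to a $2\times2$ transfer matrix whose subleading eigenvalue is bounded by $1/d$. The first-moment step and the final Chebyshev bound are routine.
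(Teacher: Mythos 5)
Your proposal is correct and follows essentially the same route as the paper: a second-order Weingarten computation per site, reduction to a one-dimensional $S_2$-valued partition function (your $2\times2$ transfer matrix is exactly the paper's plaquette chain, and your $\lambda_-=d(D^2-1)/(d^2D^2-1)$ is the paper's $\eta(d,D)$), giving $\mathbb{E}\langle\psi|\psi\rangle^2=1+\eta(d,D)^n\leq 1+d^{-n}$, followed by Chebyshev, which is the paper's Markov bound on $(\langle\psi|\psi\rangle-1)^2$. The only differences are presentational (transfer-operator language versus the graphical calculus, and a slightly more explicit first-moment argument).
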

This will be proven later, in Section~\ref{sec:normconcentration}.

\subsection{Effective dimension and equilibration}

We now turn to discussing concepts of equilibration in quantum many-body dynamics. Equilibration refers to the observation that for an 
observable $A$, the expectation value of a time-evolving quantum many-body system will for the overwhelming times take the same values as the expectation value with respect to the infinite time average: The expectation value then looks ``equilibrated'' 
\cite{Linden_etal09,christian_review}.
For a given arbitrary 
state vector $|\psi\rangle$ that reflects the initial 
pure state at time $T=0$,
this expectation value of an observable $A$ with respect to the infinite time average takes the form
\begin{equation}
A^{\infty}_{\psi}:=\lim_{t\to\infty}\frac{1}{t}\int_{0}^{t}\langle\psi|A(t')|\psi\rangle\mathrm{d}t'.
\end{equation}
The fluctuations around this infinite time average are defined as
\cite{Linden_etal09}
\begin{equation} \label{eq:itavg}
\Delta A^{\infty}_{\psi}:=\lim_{t\to\infty}\frac{1}{t}\int_{0}^t|\langle\psi|A(t')|\psi\rangle-A^{\infty}_{\psi}|^2\mathrm{d}t'.
\end{equation}
The reduced density matrices on a local region are determined by local observables.
It hence suffices to consider the above definitions.
In particular, if $\Delta A^{\infty}_{\psi}\ll 1$ then the reduced density matrices on a small region can not deviate from the infinite time average except for very brief time periods. 
In this sense a small value for $\Delta A^{\infty}_{\psi}$ implies equilibration.

\section{Equilibration of random matrix product states}

Equipped with these preparations, we are in the position to state
our first main result. On the equilibration of RMPS, 
\new{following non-equilibrium dynamics,} we will 
prove the following theorem.

\begin{theorem}[Equilibration of  RMPS]\label{theorem:equilibration}
	Let $H$ be a Hamiltonian with non-degenerate spectrum and non-degenerate spectral gap and $A$ an observable evolving in time governed by $H$. 
	Let $|\psi\rangle$ be a RMPS drawn from $\mu_{d, n, D}$ and $|\psi'\rangle:={|\psi\rangle}/{\sqrt{\langle \psi|\psi\rangle}}$.
	Then there are constants \new{independent of the system parameters} $c_1,c_2$ such that for $n$ sufficiently large the infinite time average $\Delta A^{\infty}_{\psi'}$, \eqref{eq:itavg}, fulfills 
	\begin{equation}
	\mathrm{Pr}
\left(\Delta A^{\infty}_{\psi'}\leq e^{-c_1\alpha(d,D)n)}\right)
	\geq 1-e^{-c_2\alpha(d,D)n}
	\end{equation}
	with
	\begin{equation}\label{eq:alphadD}
	\alpha(d,D)=\log\left(\frac{d-\frac{1}{dD^2}}{(1+\frac{1}{D})(1+\frac{1}{dD})}\right) \,.
	\end{equation}

\end{theorem}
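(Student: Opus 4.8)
The plan is to combine the standard bound on equilibration in terms of the effective dimension with a controlled second-moment estimate obtained from the Weingarten calculus. For $H$ with non-degenerate spectrum and non-degenerate gaps one has $\Delta A^{\infty}_{\psi'}\leq \norm{A}^{2}\sum_{k}p_{k}^{2}=\norm{A}^{2}/d_{\mathrm{eff}}$, where $p_{k}:=\abs{\braket{E_{k}}{\psi'}}^{2}$ are the populations of $\ket{\psi'}$ in the eigenbasis $\{\ket{E_{k}}\}_{k=1}^{d^{n}}$ of $H$ and $d_{\mathrm{eff}}$ is the effective dimension of Refs.~\cite{Linden_etal09,christian_review}. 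Writing $\sum_{k}p_{k}^{2}=\langle\psi|\psi\rangle^{-2}\sum_{k}\abs{\braket{E_{k}}{\psi}}^{4}$ and applying Lemma~\ref{lemma:boundonnorm} with $\varepsilon=1/2$ (so that $\langle\psi|\psi\rangle\geq1/2$ outside an event of probability $\leq4d^{-n}$), the theorem reduces to controlling the unnormalized quantity $X:=\sum_{k}\abs{\braket{E_{k}}{\psi}}^{4}=\tr[\Pi_{E}\,(\proj{\psi})^{\otimes2}]$ with $\Pi_{E}:=\sum_{k}(\proj{E_{k}})^{\otimes2}$. Since $X\geq0$, it then suffices to bound $\mathbb{E}_{\mu_{d,n,D}}[X]$ and apply Markov's inequality.

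The core step is to compute $\mathbb{E}_{\mu_{d,n,D}}[(\proj{\psi})^{\otimes2}]$ via the Weingarten formula for $U(dD)$, applied independently at each of the $n$ i.i.d.\ Haar tensors, exactly as in the mappings to one-dimensional statistical-mechanics models used for random circuits. Each tensor then carries a pair of permutations $(\sigma_{k},\tau_{k})\in S_{2}\times S_{2}$ pairing its output and input legs; feeding $\ket{0}$ into every physical input trivializes the input-physical contractions, so that summing out the $\tau_{k}$ already yields a one-dimensional classical partition function in bond spins $\sigma_{k}\in S_{2}$ with an explicit $2\times2$ transfer matrix $H$ whose entries are \emph{nonnegative}, the cancellations among the sign-indefinite Weingarten weights occurring inside this sum. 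A configuration with swaps on a subset $A\subseteq[n]$ contracts the physical output legs against $\Pi_{E}$ to the factor $\sum_{k}\tr[(\rho_{A}^{(k)})^{2}]$ with $\rho_{A}^{(k)}=\tr_{A^{c}}\proj{E_{k}}$; the eigenbasis of $H$ need not be a product basis across the chain, but $\tr[(\rho_{A}^{(k)})^{2}]\leq1$ regardless, so this factor is at most $d^{n}$ uniformly in $A$. Hence $\mathbb{E}[X]\leq d^{n}\tr[H^{n}]\leq2(d\,\lambda_{\max}(H))^{n}$, and diagonalizing $H$ gives, after a short computation, $\mathbb{E}[X]\leq2\,e^{-\alpha(d,D)n}$ with $\alpha(d,D)$ as in \eqref{eq:alphadD}. (As a sanity check, the same transfer matrix with the physical legs traced rather than contracted against $\Pi_{E}$ has top eigenvalue $\to1$ and subleading eigenvalue $\to1/d$ as $D\to\infty$, reproducing $\mathbb{E}[\langle\psi|\psi\rangle^{2}]\to1$ and the $d^{-n}$ variance underlying Lemma~\ref{lemma:boundonnorm}.)

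Finally, Markov's inequality yields $\mathrm{Pr}(X>e^{-c\,\alpha(d,D)n})\leq2\,e^{-(1-c)\alpha(d,D)n}$ for any fixed $0<c<1$; intersecting with $\{\langle\psi|\psi\rangle\geq1/2\}$ and using $d^{-n}\leq e^{-\alpha(d,D)n}$ (valid since $\alpha(d,D)<\log d$), one obtains that on an event of probability at least $1-e^{-c_{2}\alpha(d,D)n}$ the bound $\Delta A^{\infty}_{\psi'}\leq4\norm{A}^{2}e^{-c\,\alpha(d,D)n}\leq e^{-c_{1}\alpha(d,D)n}$ holds once $n$ is large enough to absorb the parameter-independent constants, which is Theorem~\ref{theorem:equilibration}. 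The step I expect to be the main obstacle is the second one: casting the Weingarten sum as the correct one-dimensional statistical model — keeping the effective transfer matrix manifestly positive (handled by eliminating the input permutations first) and coping with a possibly non-product eigenbasis of $H$ (handled by the crude but sufficient estimate $\tr[(\rho_{A}^{(k)})^{2}]\leq1$) — and then extracting the sharp exponent $\alpha(d,D)$ from its dominant eigenvalue.
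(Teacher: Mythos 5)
Your proposal is correct and follows the paper's overall skeleton --- the effective-dimension bound for Hamiltonians with non-degenerate gaps, a second-moment estimate via the Weingarten/statistical-mechanics mapping, Markov's inequality, and a union bound with the norm-concentration lemma --- but the central estimate is executed differently. The paper proves a per-eigenstate bound on $\mathbb{E}|\langle\psi|\phi\rangle|^4$ (Lemma~\ref{lemma:overlap}) by keeping both families of permutations, taking absolute values of the signed Weingarten weights (triangle inequality), bounding the eigenvector contraction by $1$ via the tensor-network Cauchy--Schwarz inequality of Ref.~\cite{kliesch_guaranteed_2019}, and then performing a binomial count; summation over the $d^n$ eigenstates then yields $\mathbb{E}(1/D_{\rm eff})\leq 2e^{-\alpha(d,D)n}$. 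You instead work directly with $X=\sum_k|\langle E_k|\psi\rangle|^4$, sum out the input permutations exactly at each site (so the sign cancellations in the Weingarten weights are handled inside the site sum), obtain a manifestly nonnegative $2\times 2$ transfer matrix in the remaining bond spins, bound the configuration-dependent physical contraction uniformly by $d^n$ via $\tr[(\rho_A^{(k)})^2]\leq 1$, and read off the exponent from the top eigenvalue. This is in fact closer in spirit to how the paper itself treats the purity computations in the later sections (the ``plaquette'' values obtained by summing the lower balls), and your estimate is verifiably consistent: the transfer matrix has entries $\frac{dD^2-1}{d(d^2D^2-1)}$ and $\frac{D(d-1)}{d(d^2D^2-1)}$, top eigenvalue $\frac{D+1}{d(dD+1)}$, so $d\,\lambda_{\max}=\frac{D+1}{dD+1}\leq\frac{D+1}{dD-1}=e^{-\alpha(d,D)}$, i.e.\ your route even gives a marginally sharper per-site factor than the paper's combinatorial bound, from which the stated $\alpha(d,D)$ follows. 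Your handling of the normalization (Lemma~\ref{lemma:boundonnorm} with fixed $\varepsilon=1/2$ and failure probability $4d^{-n}\leq 4e^{-\alpha n}$) is cruder than the paper's choice of an exponentially small $\varepsilon$, but entirely sufficient since the constant is absorbed for large $n$; the remaining bookkeeping (Markov, union bound, absorbing $\norm{A}^2$ and the boundary-condition constant into $c_1,c_2$) matches the paper's level of rigor.
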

This shows that under the time evolution of many Hamiltonians almost all matrix product states equilibrate exponentially well.
The proof of Theorem~\ref{theorem:equilibration} implies a second result which we can informally state as follows:
	Assume we draw Hamiltonians from an ensemble such that all marginal distributions for most eigenstates are distributed according to the RMPS measure introduced above. 
	Then, for an arbitrary initial state, the system equilibrates exponentially well with overwhelming probability. 
This is motivated by the fact that the equilibration of systems exhibiting many-body localization still lacks a complete rigorous explanation. In particular, it is known that the energy eigenstates of many-body localized systems satisfy an \emph{area law for the entanglement entropy}~\cite{Bauer,AreaReview,1409.1252} and can be described by matrix product states~\cite{1409.1252}.
It seems unrealistic that any natural ensemble of Hamiltonians has RMPS as marginal eigenstates but we believe that enough of this result might survive for more structured ensembles to prove equilibration of many-body localized systems.

\subsection{Proof techniques}

In this section, we present many of the proof
techniques relevant for this work. In order to
obtain Theorem~\ref{theorem:equilibration}, we observe that the result in Ref.~\cite[Thm.~3]{huang_instability_2019} can be generalized to any distribution $\nu$ on states provided that the effective dimension is large.
The effective dimension measures how much overlap the initial state has with the energy eigenstates of the Hamiltonian. 
In particular, we have the following key result from Refs.\ \cite{huang_instability_2019,tasaki_quantum_1998,reimann_foundation_2008} that we make use of.

\begin{lemma}[\cite{huang_instability_2019,tasaki_quantum_1998,reimann_foundation_2008}]
\label{lemma:effectivedimension}
Consider a Hamiltonian with non-degenerate spectrum and non-degenerate spectral gaps, i.e. $E_n-E_m= E_j-E_k$ if and only if $n= j$, $m=k$ where $E_n$ labels the eigenvalues of the Hamiltonian.
Then,
\begin{equation}
\Delta A^{\infty}_{\psi}=O(1/D_{\rm eff}),
\end{equation}
with 
\begin{equation}\label{eq:effectivedimension}
1/D_{\rm eff}:=\sum_j |\langle \psi|j\rangle|^4,
\end{equation}
where $\{|j\rangle\}$ is the eigenbasis of the Hamiltonian $H$.
\end{lemma}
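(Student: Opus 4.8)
The plan is to reproduce the classical equilibration estimate of Refs.~\cite{reimann_foundation_2008,tasaki_quantum_1998} in the present notation. First I would diagonalize $H=\sum_j E_j\proj{j}$ and expand the (for this lemma, normalized) initial state as $\ket{\psi}=\sum_j c_j\ket{j}$ with $c_j=\braket{j}{\psi}$; if $\braket{\psi}{\psi}\neq1$, one simply divides the final bound by $\braket{\psi}{\psi}^2$. Writing $A_{jk}:=\bra{j}A\ket{k}$ and evolving $A$ in the Heisenberg picture, $\bra{\psi}A(t)\ket{\psi}=\sum_{j,k}\bar c_j c_k\, e^{\ii(E_j-E_k)t}A_{jk}$. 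Since the Hilbert space is finite-dimensional, the time average is well defined, and non-degeneracy of the spectrum kills every $j\neq k$ term, leaving $A^\infty_\psi=\sum_j|c_j|^2A_{jj}$ and hence $\bra{\psi}A(t)\ket{\psi}-A^\infty_\psi=\sum_{j\neq k}\bar c_j c_k\, e^{\ii(E_j-E_k)t}A_{jk}$.

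Next I would square this and average over time. The product is a quadruple sum over index pairs $(j,k)$ and $(l,m)$ with $j\neq k$, $l\neq m$, weighted by $e^{\ii(E_j-E_k-E_l+E_m)t}$; upon time-averaging, only terms with $E_j-E_k=E_l-E_m$ contribute. Here the non-degenerate-gap hypothesis enters: it forces $j=l$ and $k=m$ (the only competing possibility, $j=m$ and $k=l$, would require $E_l=E_m$, impossible for $l\neq m$). Using $A_{kj}=\bar A_{jk}$ one arrives at the clean identity
\begin{equation}
\Delta A^\infty_\psi=\sum_{j\neq k}|c_j|^2|c_k|^2\,|A_{jk}|^2 .
\end{equation}

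It then remains to bound the right-hand side by $1/D_{\rm eff}=\sum_j|c_j|^4$. The chain I have in mind is: drop the restriction $j\neq k$, since all summands are nonnegative; apply $|c_j|^2|c_k|^2\leq\tfrac12(|c_j|^4+|c_k|^4)$ together with the symmetry $|A_{jk}|^2=|A_{kj}|^2$ to reduce the double sum to $\sum_{j,k}|c_j|^4|A_{jk}|^2$; and finally use $\sum_k|A_{jk}|^2=\bra{j}A^2\ket{j}\leq\norm{A}^2$. This yields $\Delta A^\infty_\psi\leq\norm{A}^2\sum_j|c_j|^4=\norm{A}^2/D_{\rm eff}$, i.e.\ the claimed $O(1/D_{\rm eff})$, with implied constant $\norm{A}^2$ (finite for bounded observables, and for local observables independent of $n$).

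I do not expect a genuine obstacle, as the statement is standard; the two steps that require care are (i) the time-averaging bookkeeping, where one must check that the non-degenerate-gap condition eliminates \emph{all} cross terms --- including the a priori excluded degenerate cases and the index swap $j=m$, $k=l$ --- so that no spurious contribution survives, and (ii) the convexity-and-symmetrization step, which is precisely what upgrades the trivial bound $\norm{A}^2$ to the quantitatively useful $\norm{A}^2/D_{\rm eff}$.
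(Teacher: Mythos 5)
Your proof is correct: the paper does not prove this lemma but simply cites Refs.~\cite{huang_instability_2019,tasaki_quantum_1998,reimann_foundation_2008}, and your argument is precisely the standard one from those references --- dephasing via non-degenerate spectrum, elimination of cross terms via non-degenerate gaps, and the convexity bound giving $\Delta A^{\infty}_{\psi}\leq \norm{A}^2/D_{\rm eff}$. The two delicate points you flag (the index-swap case under the gap condition and the symmetrization step) are exactly the right ones, and your handling of them is sound.
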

Theorem~\ref{theorem:equilibration} follow immediately from Lemma~\ref{lemma:effectivedimension} in combination with the following 
statement.

	\begin{lemma}[Bound to effective dimension]\label{lemma:overlap}
		For all state vectors $|\phi\rangle$ we have
	\begin{equation}\label{eq:overlapbound}
\mathbb{E}_{\psi\sim \mu_{d,n,D}}|\langle \psi|\phi\rangle|^4\leq 2\frac{(1+\frac{1}{D})^{n}(1+\frac{1}{dD})^{n}}{(d^2-\frac{1}{D^2})^n}\,.
	\end{equation}
\end{lemma}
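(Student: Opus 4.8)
The plan is to compute $\mathbb{E}_{\psi\sim\mu_{d,n,D}}|\braket{\psi}{\phi}|^4$ exactly via the Weingarten calculus and then bound the result by a transfer-matrix estimate. Writing $|\braket{\psi}{\phi}|^4=\braket{\psi}{\phi}^{\,2}\,\overline{\braket{\psi}{\phi}}^{\,2}$ and expanding each factor in the MPS tensors yields a sum over index tuples of $\overline{\psi_{\vec i}}\,\overline{\psi_{\vec j}}\,\psi_{\vec k}\,\psi_{\vec l}\,\phi_{\vec i}\,\phi_{\vec j}\,\overline{\phi_{\vec k}}\,\overline{\phi_{\vec l}}$, where each $\psi_{\vec i}$ carries one factor $A^{(m)}_{i_m}=\bra{i_m}U^{(m)}\ket 0$ per site $m$ and each $\overline{\psi_{\vec i}}$ one factor of $\overline{U^{(m)}}$. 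Since the $U^{(m)}$ are i.i.d.\ Haar on $U(dD)$, the expectation factorizes into a product of per-site second moments, and at each site $m$ the Weingarten formula produces a pair of permutations $\sigma_m,\tau_m\in S_2$ --- pairing the two ``ket'' replicas with the two ``bra'' replicas on the output, respectively input, legs of $U^{(m)}$ --- together with the scalar weight $\mathrm{Wg}(\sigma_m\tau_m^{-1},dD)$.

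Next I would read off the contraction obtained after this substitution using the graphical calculus. The physical input legs are all fed the fixed vector $\ket 0$, so they contribute a factor $1$ regardless of the permutations. The virtual legs form a ring: the bond between sites $m$ and $m+1$ (indices cyclic) carries two ket- and two bra-replicas of a $\CC^D$ wire, contracted by $\sigma_m$ on the output side of site $m$ and by $\tau_{m+1}$ on the input side of site $m+1$; counting cycles in $S_2$ this contributes $D^{1+\delta_{\sigma_m,\tau_{m+1}}}$. The physical output legs carry the $\phi$-tensors, which after the $\sigma_m$-pairings assemble into $\langle\phi^{\otimes 2}|\bigl(\bigotimes_m P_{\sigma_m}\bigr)|\phi^{\otimes 2}\rangle$, where $P_{\sigma_m}$ is the identity or the swap on the two replicas at site $m$. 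Because $\bigotimes_m P_{\sigma_m}$ is a Hermitian unitary and $\ket\phi$ is normalized, this factor lies in $[-1,1]$ for every configuration $\{\sigma_m\}$. Collecting everything,
\begin{equation}
\mathbb{E}_{\psi\sim\mu_{d,n,D}}|\braket{\psi}{\phi}|^4=\!\!\sum_{\{\sigma_m,\tau_m\}}\Bigl(\prod_m \mathrm{Wg}(\sigma_m\tau_m^{-1},dD)\,D^{1+\delta_{\sigma_m,\tau_{m+1}}}\Bigr)\Bigl\langle\phi^{\otimes 2}\Bigl|{\textstyle\bigotimes_m} P_{\sigma_m}\Bigr|\phi^{\otimes 2}\Bigr\rangle .
\end{equation}

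From here I would pass to an upper bound by the triangle inequality, using $\bigl|\langle\phi^{\otimes 2}|\bigotimes_m P_{\sigma_m}|\phi^{\otimes 2}\rangle\bigr|\le 1$ together with $|\mathrm{Wg}(e,N)|=1/(N^2-1)$ and $|\mathrm{Wg}((12),N)|=1/\bigl(N(N^2-1)\bigr)$ for $N=dD$. What remains is exactly $\tr(\tilde M^n)$ for the $2\times 2$ transfer matrix $\tilde M_{\tau,\tau'}=\sum_{\sigma\in S_2}|\mathrm{Wg}(\sigma\tau^{-1},dD)|\,D^{1+\delta_{\sigma,\tau'}}$, which turns out to be symmetric with equal diagonal entries $a\ge0$ and equal off-diagonal entries $b\ge0$; its eigenvalues are $a\pm b$, and a short computation gives $a+b=\bigl(1+\tfrac1D\bigr)\bigl(1+\tfrac1{dD}\bigr)\big/\bigl(d^2-\tfrac1{D^2}\bigr)$. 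Since $a,b\ge0$ we have $|a-b|\le a+b$, so $\tr(\tilde M^n)=(a+b)^n+(a-b)^n\le 2(a+b)^n$, which is precisely the claimed bound \eqref{eq:overlapbound}.

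The main obstacle is the second step: setting up the tensor-network bookkeeping carefully enough to be confident of the bond exponent $D^{1+\delta_{\sigma_m,\tau_{m+1}}}$ and of which permutation sits on which side of each bond under periodic boundary conditions, and verifying that the physical-leg factor is uniformly bounded by $1$ for \emph{every} permutation pattern, not only the aligned ones. Once the partition function is correctly identified, diagonalizing the $2\times 2$ transfer matrix and simplifying $a+b$ into the advertised closed form is routine algebra.
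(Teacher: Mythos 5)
Your proposal is correct and follows essentially the same route as the paper: a site-by-site Weingarten expansion yielding a partition function over $S_2$ configurations, the uniform bound of the $\phi$-dependent factor by $1$, and a counting of the remaining weights that reproduces the stated bound, factor $2$ included. The only cosmetic differences are that you bound the physical-leg factor directly via unitarity of the partial swaps (the paper invokes the tensor-network Cauchy--Schwarz inequality of Ref.~\cite{kliesch_guaranteed_2019}) and you organize the final count as $\tr(\tilde M^n)\leq 2(a+b)^n$ for a $2\times 2$ transfer matrix instead of the paper's binomial double sum; both yield the identical estimate.
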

We are trying to find an upper bound on 
\begin{equation}\label{eq:effectivedim}
\mathbb{E}|\langle \psi|\phi\rangle|^4
= \langle \phi|^{\otimes 2}\mathbb{E}(|\psi\rangle\langle \psi|)^{\otimes 2}|\phi\rangle^{\otimes 2}.
\end{equation}
We make use of the Weingarten calculus~\cite{collins2006integration,brouwer1996diagrammatic}, as elaborated upon here in the following statement.

\begin{lemma}[Weingarten calculus]\label{theorm:weingarten} 
The $t$-th moment operator of Haar-random unitaries is given by
\begin{equation}\label{eq:weingarten}
\mathbb{E}_{U\sim\mu_H} U^{\otimes t}\otimes \overline{U}^{\otimes t} = \sum_{\sigma,\pi\in S_t} \mathrm{Wg}(\sigma^{-1}\pi,q)|\sigma\rangle\langle \pi|,
\end{equation}
where $|\sigma\rangle:=(\mathbb{1}\otimes r(\sigma))|\Omega\rangle$ with $r$ being the representation of the symmetric group  $S_t$ on $(\mathbb C^{q})^{\otimes t}$ which permutes the vectors in the tensor product, and $|\Omega\rangle=\sum_{j=1}^{q^t}|j,j\rangle$ the maximally entangled state vector up to normalization.
\end{lemma}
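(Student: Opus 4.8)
The plan is to recognize the left-hand side of \eqref{eq:weingarten} as an orthogonal projector and then express that projector in the (possibly linearly dependent) spanning family $\{|\sigma\rangle\}_{\sigma\in S_t}$. I would first set $M_t:=\mathbb{E}_{U\sim\mu_H}U^{\otimes t}\otimes\overline{U}^{\otimes t}$, with $q$ the dimension as in the statement, and note that since $V\mapsto V^{\otimes t}\otimes\overline{V}^{\otimes t}$ is a unitary representation of the compact group $U(q)$ and $M_t$ is the average of its operators against the normalized Haar measure, $M_t$ is the orthogonal projector onto the subspace of vectors fixed by all $V^{\otimes t}\otimes\overline{V}^{\otimes t}$ --- the standard fact that the Haar average of a unitary representation is the orthogonal projection onto the invariant subspace. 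To identify that subspace I would invoke the vectorization isomorphism $\mathrm{vec}(X):=(\mathbb{1}\otimes X)|\Omega\rangle$ between $\mathrm{End}\big((\CC^{q})^{\otimes t}\big)$ and $(\CC^{q})^{\otimes t}\otimes\overline{(\CC^{q})^{\otimes t}}$, under which $U^{\otimes t}\otimes\overline{U}^{\otimes t}$ intertwines with the conjugation map $X\mapsto\overline{U}^{\otimes t}X(\overline{U}^{\otimes t})^{\dagger}$; its fixed points are the operators commuting with every $\overline{U}^{\otimes t}$, i.e.\ the commutant of $\{\overline{U}^{\otimes t}:U\in U(q)\}=\{U^{\otimes t}:U\in U(q)\}$. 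By Schur--Weyl duality this commutant equals $\mathrm{span}_{\CC}\{r(\pi):\pi\in S_t\}$, which $\mathrm{vec}$ carries onto $\mathcal{S}:=\mathrm{span}\{|\pi\rangle:\pi\in S_t\}$; hence $M_t$ is the orthogonal projector onto $\mathcal{S}$.

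Next I would write this projector explicitly in the spanning family. Let $B:\CC^{|S_t|}\to(\CC^{q})^{\otimes 2t}$ be the synthesis map $e_\sigma\mapsto|\sigma\rangle$; then $B^{\dagger}B=G$, the Gram matrix with entries $G_{\sigma\pi}:=\langle\sigma|\pi\rangle$, and the orthogonal projector onto $\mathrm{range}(B)=\mathcal{S}$ is $B\,G^{+}B^{\dagger}=\sum_{\sigma,\pi\in S_t}(G^{+})_{\sigma\pi}\,|\sigma\rangle\langle\pi|$, where $G^{+}$ is the Moore--Penrose pseudoinverse (an honest inverse exactly when $q\ge t$, where the $|\sigma\rangle$ are linearly independent). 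A short computation gives $G_{\sigma\pi}=\langle\Omega|\big(\mathbb{1}\otimes r(\sigma)^{\dagger}r(\pi)\big)|\Omega\rangle=\tr[r(\sigma^{-1}\pi)]=q^{\#\mathrm{cycles}(\sigma^{-1}\pi)}$, so $G$ depends on $(\sigma,\pi)$ only through $\sigma^{-1}\pi$, i.e.\ $G$ is invariant under $(\sigma,\pi)\mapsto(\rho\sigma,\rho\pi)$ for every $\rho\in S_t$. This equivariance is inherited by $G^{+}$, so $(G^{+})_{\sigma\pi}$ is again a function of $\sigma^{-1}\pi$ alone; defining $\mathrm{Wg}(\tau,q):=(G^{+})_{\sigma\pi}$ for any $\sigma,\pi$ with $\sigma^{-1}\pi=\tau$ turns the projector formula into exactly \eqref{eq:weingarten}.

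I expect the main effort to be bookkeeping rather than conceptual depth: one must pin down the vectorization and transposition conventions so that the commutant operators really correspond to the vectors $|\sigma\rangle$ as defined (the only genuine subtlety is $\overline{U}$ versus $U$, which is harmless because complex conjugation is a bijection of $U(q)$), and then verify the Gram-matrix evaluation and the identity $M_t=B\,G^{+}B^{\dagger}$. The one genuinely delicate regime is $q<t$: there $B$ has a nontrivial kernel and $G$ is singular, so the naive inverse-Gram formula is ill-defined and one must work with the Moore--Penrose pseudoinverse; the well-definedness of $\mathrm{Wg}(\cdot,q)$ as a function of $\sigma^{-1}\pi$ then rests on the fact that conjugating a matrix by a permutation matrix commutes with taking the pseudoinverse, which preserves the $S_t$-equivariance of $G$.
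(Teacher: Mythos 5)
Your proof is correct: the paper does not actually prove this lemma but imports it from the cited literature (Collins--\'Sniady and Brouwer--Beenakker), and your argument --- the Haar average is the orthogonal projector onto the invariant subspace, vectorization plus Schur--Weyl duality identify that subspace with $\mathrm{span}\{|\sigma\rangle:\sigma\in S_t\}$, and the projector onto the range of the synthesis map is written via the (pseudo)inverse of the Gram matrix $G_{\sigma\pi}=q^{\#\mathrm{cycles}(\sigma^{-1}\pi)}$, whose $S_t$-equivariant entries define $\mathrm{Wg}(\cdot,q)$ --- is precisely the standard derivation found there. For the application in this paper only $t=2$ and $q=dD\geq t$ are needed, so the Gram matrix is genuinely invertible and the Moore--Penrose subtlety you correctly flag for $q<t$ never arises.
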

The Weingarten calculus is a powerful tool in particular when suitably combined with Penrose tensor-network diagrams providing a graphical calculus:
E.g., for $t=2$ and $U(q)$ the formula~Eq.~\eqref{eq:weingarten} takes the form
\begin{multline}
\mathbb{E}_{U\sim\mu_H}\ipic{weingartenformula}{0.8}\\=\frac{1}{q^2-1}\left[\;\;\ipic{weingartenformula2}{0.8}-\frac{1}{q}\ipic{weingartenformula3}{0.8}-\frac{1}{q}\ipic{weingartenformula4}{0.8}+\ipic{weingartenformula5}{0.8}\;\;\right]\;.\label{eq:graphweingarten}
\end{multline}
	Graphically, we can express~\eqref{eq:effectivedim} as
	\begin{figure}[H]
	\begin{multline}
\mathbb{E}|\langle \psi|\phi\rangle|^4
=\mathbb{E}_{U^{(i)}\sim \mu_H}\\\ipic{mpsrandom}{0.8}\,.
	\end{multline}
	\end{figure}
	By evaluating each $\mathbb{E}_{U^{(i)}}U^{(i)}\otimes U^{(i)}\otimes \overline{U}^{(i)}\otimes \overline{U}^{(i)}$ individually according to~\eqref{eq:graphweingarten}, this can be reformulated as a partition function. 
	Introducing the notation 
	\begin{equation}
	|\psi\rangle^{\otimes 2,2}:=|\psi\rangle^{\otimes 2}\otimes \overline{|\psi\rangle}^{\otimes 2}, 
	\end{equation}
	we obtain
	\begin{multline}\label{eq:partitionfunction}
\mathbb{E}|\langle \psi|\phi\rangle|^4\\
=\sum_{\{\mathbb{1},\mathbb{F}\}^{2n}}\ipic{partitionfunction}{0.6}.
	\end{multline}
	 Here, the black balls correspond to a choice of an element of
	 $S_2=\{\mathbb{1},\mathbb{F}\}$ with $\mathbb{F}$ the swap permutation.
	 The wiggly line corresponds to different weights for every pair of permutations $(\pi,\sigma)$ with the corresponding value of the Weingarten function $\mathrm{Wg}(\pi^{-1}\sigma,q)$ according to~\eqref{eq:graphweingarten} with $q=dD$.
	 The red edges denote contractions over $\CC^d$ and the blue edges are contractions over $\CC^D$.
	 This is reminiscent of Ref.~\cite{hunter-jones_unitary_2019}, where the frame potential of random quantum circuits is mapped to a partition function with local degrees of freedom corresponding to permutations.
	 Notice, however, that for every permutation $\pi\in S_2$, we always have \begin{equation}
	     \langle \pi|0\rangle^{\otimes 2,2}=|\langle 0|0\rangle|^2=1.
	 \end{equation}
	Moreover, every summand contains a factor of the form $\langle \phi|^{\otimes 2,2}\bigotimes_{l=1}^n|\sigma_{l}\rangle$.
	We can bound this contribution using the following generalization of the Cauchy-Schwarz inequality~\cite{kliesch_guaranteed_2019}.
	\begin{lemma}[Cauchy-Schwarz inequality
	for tensor networks \cite{kliesch_guaranteed_2019}]
		Consider a tensor network $(T,C)$ with $J\geq 2$ tensors $T=(t^j)_{\in \{1,\cdots, J\}}$ such that no tensor in the contraction $C$ self-contracts, i.e. no string connects a tensor with itself.
		Then,
		\begin{equation}
		|C(T)|\leq \prod_{j=1}^J \left|\left|t^j\right|\right|_F,
		\end{equation}
		where $||.||_F$ is the Frobenius norm of the tensor $t^j$ viewed
		as a vector.
	\end{lemma}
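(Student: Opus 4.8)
The plan is to prove the statement by induction on the number of tensors $J$, in a slightly strengthened form: I will allow the tensor network $(T,C)$ to carry \emph{open} (uncontracted) legs, so that $C(T)$ is itself a tensor rather than a scalar, and prove $\|C(T)\|_F \le \prod_{j=1}^{J}\|t^j\|_F$ in that generality; the scalar claim of the lemma is then the special case in which all legs are contracted. Passing to this stronger inductive hypothesis is exactly what makes the induction close, because when one tensor is removed from the network the legs that used to attach to it must be treated as open legs of the remaining sub-network. The only analytic ingredient needed throughout is the submultiplicativity of the Frobenius norm, $\|AB\|_F \le \|A\|_{\mathrm{op}}\|B\|_F \le \|A\|_F\|B\|_F$, which is itself just the Cauchy--Schwarz inequality applied entrywise and then summed.

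For the base case $J=2$, group the open legs of $t^1$ into a multi-index $o$ and the legs it shares with $t^2$ into a multi-index $\gamma$ (by the no-self-contraction hypothesis every leg of $t^1$ is of one of these two kinds), and likewise write the open legs of $t^2$ as $p$. Then $C(T)_{o,p}=\sum_\gamma (t^1)_{o,\gamma}(t^2)_{\gamma,p}$ is a matrix product $AB$ along $\gamma$ (degenerating to the outer product if $t^1$ and $t^2$ share no leg, i.e.\ $\gamma$ empty), and the submultiplicativity above gives $\|C(T)\|_F \le \|t^1\|_F\|t^2\|_F$. The case $J=1$, if one wishes to include it, is the trivial identity $\|C(T)\|_F=\|t^1\|_F$.

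For the inductive step, assume the claim for all admissible networks with $J$ tensors and take a network $(T,C)$ with $J+1$ tensors and no self-contractions. Single out one tensor, say $t^{J+1}$, and let $S$ be the collection of its legs that are contracted against the remaining tensors; by hypothesis each leg in $S$ runs to some $t^j$ with $j\le J$. Let $(T',C')$ be the sub-network on $t^1,\dots,t^J$ obtained by deleting $t^{J+1}$ and promoting the legs of $S$ to open legs. This sub-network still has no self-contractions, so the inductive hypothesis gives $\|M\|_F \le \prod_{j=1}^{J}\|t^j\|_F$ with $M:=C'(T')$. Finally, $C(T)$ is recovered from $M$ and $t^{J+1}$ by contracting over $S$: writing the open legs of $M$ other than those in $S$ as $o$, the legs in $S$ as $s$, and the open legs of $t^{J+1}$ as $p$, we have $C(T)_{o,p}=\sum_s M_{o,s}\,(t^{J+1})_{s,p}$, again a matrix product along $s$. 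Applying the $J=2$ estimate once more yields $\|C(T)\|_F \le \|M\|_F\,\|t^{J+1}\|_F \le \prod_{j=1}^{J+1}\|t^j\|_F$, completing the induction.

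The main (and only mild) obstacle is the bookkeeping of legs rather than any analytic subtlety: one must check that restricting the contraction $C$ to a subset of the tensors preserves the no-self-contraction property (it does, immediately), that the bundle of legs joining the removed tensor to the rest is consistently treated as one multi-index on both sides of the matrix product, and that disconnected networks cause no trouble — if $t^{J+1}$ shares no leg with the rest then $S=\varnothing$ and the "matrix product" above is an outer product, for which the Frobenius norm is exactly multiplicative. In the application of interest, taking the $n+1$ tensors to be $\langle\phi|^{\otimes 2,2}$ and the $|\sigma_l\rangle$, this bounds each summand $\langle\phi|^{\otimes 2,2}\bigotimes_{l=1}^n |\sigma_l\rangle$ by the product of the relevant Frobenius norms, which is the use made of it in the sequel.
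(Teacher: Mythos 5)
Your proof is correct. Note that the paper itself contains no proof of this lemma: it is imported verbatim from Ref.~\cite{kliesch_guaranteed_2019}, so there is no in-paper argument to compare against. Your device of strengthening the statement to networks with open legs, so that the inductive claim becomes $\Vert C(T)\Vert_F\leq\prod_{j}\Vert t^j\Vert_F$ for the resulting tensor, is exactly what makes the removal of one tensor close the induction, and the two points that need care are both handled: the sub-network obtained by deleting $t^{J+1}$ and promoting its contraction legs inherits the no-self-contraction property, and the final contraction is a genuine two-tensor network (a matrix product along the bundled index $S$, an outer product with equality of norms if $S=\varnothing$), since a string joining two legs of $t^{J+1}$ is excluded by hypothesis. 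This is in the same spirit as the proof in the cited reference, which likewise reduces the claim to Cauchy--Schwarz/submultiplicativity for suitable matricizations, so nothing further is needed.
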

\begin{proof}[Proof of Lemma~\ref{lemma:overlap}]
As the tensor network contraction $\langle \phi|^{\otimes 2,2}\bigotimes_{l=1}^n|\sigma_{l}\rangle$ does not contain self-contractions, this yields 
\begin{equation}
|\langle \phi|^{\otimes 2,2}\bigotimes_{l=1}^n|\sigma_{l}\rangle|\leq |||\phi\rangle||_F^4=1.
\end{equation}
Therefore, we can apply a triangle inequality to the sum in~\eqref{eq:partitionfunction} to obtain the  bound
\begin{multline}\label{eq:1/DBound}
 \mathbb{E}|\langle \psi|\phi\rangle|^4\leq\\ \sum_{\{\mathbb{1},\mathbb{F}\}^{2n}}\left|\ipic{partitionfunction2}{0.55}\right|.
\end{multline}
As contractions over $\mathbb{C}^D$ we have $\langle\mathbb{1}|\mathbb{1}\rangle=\langle\mathbb{F}|\mathbb{F}\rangle=D^2$ and 
\begin{equation}
\langle\mathbb{1}|\mathbb{F}\rangle=\langle\mathbb{F}|\mathbb{1}\rangle=D.
\end{equation}
This allows us to obtain a sufficient upper bound on $1/D_{\rm eff}$ via a combinatorial argument. 
\new{Consider a sequence $(\sigma_1\pi_1),(\sigma_2\pi_2),\dots, (\sigma_n\pi_n)\in \{\mathbb{1},\mathbb{F}\}^{2n}$. Here, the $\sigma$ refer to the balls on top in equation \eqref{eq:1/DBound} and $\pi$ to the ones on the bottom. If $\sigma_i=\pi_i$, the total interaction between sites $i$ and $i+1$ contributes with a term
\begin{equation}
    \frac{D^2}{d^2D^2-1}
\end{equation}
If $\sigma_i\neq \pi_i$, this contribution is divided by $dD$, and if $\pi_i\neq \sigma_{i+1}$ it is divided by $D$. Hence we sum over all possible ways of choosing $\sigma_i\neq \pi_i$ or $\pi_i\neq \sigma_{i+1}$, and divide by the corresponding factor, to get}
\begin{align}
\begin{split}
\mathbb{E}|\langle\phi|\psi\rangle|^4&\leq 2\frac{D^{2n}}{(d^2D^2-1)^n}\sum_{l=0}^{n}\sum_{r=0}^n{n\choose l}{n\choose r}D^{-l-r}d^{-r}\\
&=2\frac{(1+\frac{1}{D})^{n}(1+\frac{1}{dD})^{n}}{(d^2-\frac{1}{D^2})^n}.
\end{split}
\end{align}
This implies Lemma~\ref{lemma:overlap}.
\end{proof}

Theorem~\ref{theorem:equilibration} now follows from applying Lemma~\ref{lemma:overlap} to~\eqref{eq:effectivedimension} together with an application of Markov's inequality and Lemma \ref{lemma:boundonnorm}.
\begin{proof}[Proof of Theorem~\ref{theorem:equilibration}]
By Lemma \ref{lemma:overlap} we have \begin{equation}
\begin{aligned}
\mathbb{E}(1/D_{\mathrm{eff}})&=\sum_{j}\mathbb{E}(|\langle\psi|j\rangle|^4)\\&\leq 2 \frac{\left(1+\frac{1}{D}\right)^n\left(1+\frac{1}{dD}\right)^n}{\left(d-\frac{1}{dD^2}\right)^n}\\
&=2 e^{-\alpha n} 
\end{aligned}
\end{equation}
with $\alpha = \alpha(d, D)$ as defined in \eqref{eq:alphadD}.
 Picking two positive constants $k_1,
 k_2>0$ with $k_1<\frac 12$, since $\Delta A_\psi^\infty=O(1/D_{\mathrm{eff}})$, Markov's inequality yields
	\begin{equation}
	\mathrm{Pr}
\left(\Delta A^{\infty}_{\psi}\leq e^{-k_1\alpha n}\right)
	\geq 1-e^{-k_2\alpha n}.
	\end{equation}
	Let $N:=\langle\psi|\psi\rangle$ and let $|\psi'\rangle=N^{-1/2}|\psi\rangle$ be the normalized state vector. We then have 
	\begin{equation}
	\Delta A_{\psi'}^\infty=\frac{\Delta A_{\psi}^\infty}{N^2}.
	\end{equation}
	Suppose $\Delta A^{\infty}_{\psi}\leq e^{-k_1\alpha n}$ and $|N-1|\leq e^{-k_1\alpha n}$. Then $N^2\geq(1-e^{-k_1\alpha n})^2 $ and
	\begin{equation} 
	    \Delta A_{\psi'}^\infty \leq \frac{e^{-k_1\alpha n}}{(1-e^{-k_1\alpha n})^2}\leq e^{-c_1\alpha n}
	\end{equation}
	for some constant $c_1>0$ and the integer $n$ 
	being large enough. Then, by the union bound, we get
	\begin{equation}
	\begin{aligned}
&\mathrm{Pr}
\left(\Delta A^{\infty}_{\psi'}\leq e^{-c_1\alpha n}\right)\\
&\geq1- \mathrm{Pr}\left(\Delta A^{\infty}_{\psi}\geq e^{-k_1\alpha n}\textrm{ or }|N-1|\geq e^{-k_1\alpha n} \right)\\
&\geq \mathrm{Pr}\left(\Delta A^{\infty}_{\psi}\leq e^{-k_1\alpha n}\right)-\mathrm{Pr}\left(|N-1|\geq e^{-k_1\alpha n} \right)\\
&\geq1-e^{-k_2\alpha n}-\mathrm{Pr}\left(|N-1|\geq e^{-k_1\alpha n}\right) \\&\geq1-e^{-k_2\alpha n}-e^{-2k_1\alpha_n}d^{-n}\\&\geq 1-e^{-k_2\alpha n}-e^{-(1-2k_1)\alpha n}
	\end{aligned}
	\end{equation}
	where we have used Lemma \ref{lemma:boundonnorm} and $\alpha\leq\log d$. Finally,  
		\begin{equation}
	\begin{aligned}
\mathrm{Pr}
\left(\Delta A^{\infty}_{\psi'}\leq e^{-c_1\alpha n}\right)&\geq1-e^{-k_2\alpha n}-e^{-\alpha(1-2k_1)n} \\&\geq1-e^{-c_2\alpha n}
	\end{aligned}
	\end{equation}
	for a constant $c_2>0$ and $n$ sufficiently
	large.
\end{proof}

\section{Extensivity of
the $2$-R\'{e}nyi entanglement entropy} 

In this section, we turn to proving our second result.
This is once again motivated by the phenomenon of equilibration 
in quantum many-body physics and the endeavour to provide a rigorous foundation for it. 
It has been proven in Ref.~\cite{wilming_entanglement_2019} that systems equilibrate if their energy eigenstates have R\'{e}nyi entropy that is extensive in the system size $n$, which means that
\begin{equation}\label{eq:extensiveentropy}
S_2\left(\tr_A[|j\rangle\langle j|]\right)\geq g(j)n
\end{equation}
for a sufficiently well-behaved function $g$ and \new{\textit{some} subsystem $A$}.
 This property has been dubbed ~\textit{entanglement ergodicity} \cite{wilming_entanglement_2019}.
Motivated by this insight, it has been proven in Ref.~\cite{rolandi2020extensive} that generic translation-invariant MPS have extensive R\'{e}nyi entropy if one considers a bi-partition \new{of the chain into the subsystem that corresponds to every $k$th site and the rest.}
That is, the entropy grows proportional to 
the boundary $|\partial A|$.
\new{Considering this particular partitioning is enough, since one partition of the system satisfying \eqref{eq:extensiveentropy} suffices to consider the system entanglement ergodic, and to show the equilibration properties proven in \cite{wilming_entanglement_2019}.}
Here, we prove such a result with explicit quantitative bounds for disordered RMPS with overwhelming probability. \new{Interestingly, the details of the correlation length of the state is of no concern for the bound presented.
Even though the correlation length is expected to be small compared to the system size but non-zero
(compare also the results of
Ref.\ \cite{Lancien} in the translationally invariant case), 
we still arrive at an extensive bound of the respective entropy, independent of how the distance $k$ between the sites precisely relates to the correlation length.}

\begin{theorem}[Extensivity of entanglement entropies]\label{theorem:extensiverenyi}
	Suppose $n$ is divisible by \new{a positive integer} $k$ and $A$ consists of \new{what remains after tracing out} every $k$-th qudit. Let $\rho_A'$ be the normalized density matrix of a RMPS drawn from $\mu_{d, n, D}$ reduced on $A$.
	Then,
	\begin{equation}
	\mathrm{Pr}\left(S_2(\rho_A')\geq \Omega\left(\frac{n}{k}\right)\right)\geq 1-e^{-\Omega(n/k)}.
	\end{equation}
\end{theorem}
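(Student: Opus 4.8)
The plan is to run the same Weingarten/statistical‑mechanics machinery used for Theorem~\ref{theorem:equilibration}, now with the swap boundary condition appropriate to the $2$‑R\'enyi purity, and then to control the resulting one‑dimensional transfer matrix by an explicit spectral estimate. First I would pass to the \emph{un}‑normalised reduced state $\rho_A:=\tr_{A^c}[|\psi\rangle\langle\psi|]$, with $N:=\langle\psi|\psi\rangle$, so that $\rho_A'=\rho_A/N$ and $S_2(\rho_A')=-\log\tr[\rho_A^2]+2\log N$; by Lemma~\ref{lemma:boundonnorm} with $\varepsilon=1/2$ one has $N\geq 1/2$ except with probability $\leq 4d^{-n}$. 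Thus it suffices to prove $\mathbb{E}_{\psi\sim\mu_{d,n,D}}\tr[\rho_A^2]\leq C\,r^{\,n/k}$ for absolute constants $C\geq1$, $r\in(0,1)$: Markov's inequality applied to $\tr[\rho_A^2]$ at threshold $C\,r^{\,n/(2k)}$, together with a union bound using Lemma~\ref{lemma:boundonnorm} (and the fact that $d^{-n}$ decays faster than $r^{\,n/(2k)}$), then gives $S_2(\rho_A')\geq\tfrac{n}{2k}\log(1/r)-O(1)=\Omega(n/k)$ with probability $\geq1-e^{-\Omega(n/k)}$.

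To bound $\mathbb{E}\,\tr[\rho_A^2]$ I would write $\tr[\rho_A^2]=\tr[(|\psi\rangle\langle\psi|)^{\otimes2}(\mathbb{F}_A\otimes\id_{A^c})]$, a contraction of four copies of the unitarily embedded network, take the expectation over the i.i.d.\ Haar unitaries, and apply Lemma~\ref{theorm:weingarten} at each site. As in~\eqref{eq:partitionfunction} this yields a ring partition function with local variables $(\sigma_\ell,\pi_\ell)\in S_2^{\,2}$, Weingarten weights, reference legs contracting trivially ($\langle\pi_\ell|0\rangle^{\otimes2,2}=1$) and virtual legs contracting over $\CC^D$ into powers $D^{\#\mathrm{cycles}}$. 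The only new ingredient is the physical boundary: a kept site $\ell\in A$ cross‑links its two physical legs, contributing $\langle\mathbb{F}|\sigma_\ell\rangle_{\CC^d}=d^{\#\mathrm{cycles}(\mathbb{F}\sigma_\ell)}$, while a traced‑out site self‑links them, contributing $\langle\mathbb{1}|\sigma_\ell\rangle_{\CC^d}$. Summing out each $\pi_\ell$ (which couples only to $\sigma_{\ell-1}$, $\sigma_\ell$) leaves a positive‑weight one‑dimensional chain,
\begin{equation}
\mathbb{E}\,\tr[\rho_A^2]=\tr\big[(T_{\mathbb{F}}^{\,k-1}T_{\mathbb{1}})^{\,n/k}\big],
\end{equation}
where, with $a:=\tfrac{dD^2-1}{d(d^2D^2-1)}$ and $b:=\tfrac{D(d-1)}{d(d^2D^2-1)}$ the surviving ``agree'' and ``differ'' neighbour weights, a kept site carries $T_{\mathbb{F}}=\left(\begin{smallmatrix}ad & bd^2\\ bd & ad^2\end{smallmatrix}\right)$ and a traced‑out site $T_{\mathbb{1}}=\left(\begin{smallmatrix}ad^2 & bd\\ bd^2 & ad\end{smallmatrix}\right)$; the periodicity of the pattern (traced‑out $=$ every $k$‑th site) folds the $n$‑site product into the $(n/k)$‑th power.

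It then remains to control the spectral radius of $M:=T_{\mathbb{F}}^{\,k-1}T_{\mathbb{1}}$. Being a product of entrywise‑positive $2\times2$ matrices, $M$ has two positive real eigenvalues $0<\lambda_-\leq\lambda_+$, so $\mathbb{E}\,\tr[\rho_A^2]=\lambda_+^{\,n/k}+\lambda_-^{\,n/k}\leq 2\lambda_+^{\,n/k}$. A direct computation gives $\det T_{\mathbb{F}}=\det T_{\mathbb{1}}=\delta:=\tfrac{d(D^2-1)}{d^2D^2-1}<1/d$ and---and this is the delicate point---that \emph{both} $T_{\mathbb{F}}$ and $T_{\mathbb{1}}$ have largest eigenvalue exactly $1$ (restricted to either sublattice the network merely evaluates $\mathbb{E}\langle\psi|\psi\rangle^2\approx1$). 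Their leading eigenvectors differ, however: with $P$ the rank‑one spectral projector of $T_{\mathbb{F}}$ onto eigenvalue~$1$, one finds $\tr[P\,T_{\mathbb{1}}]=\beta:=\tfrac{D^2+d}{dD^2+1}<1$ for $D\geq2$, in fact $\beta\leq\tfrac1d+\tfrac14\leq\tfrac34$. Splitting $T_{\mathbb{F}}^{\,k-1}=P+\delta^{\,k-1}(\id-P)$ then gives $\tr M=\beta+\delta^{\,k-1}(1+\delta-\beta)$ and $\det M=\delta^{\,k}$, so evaluating the characteristic polynomial of $M$ at $1$ yields $(1-\lambda_+)(1-\lambda_-)=1-\tr M+\det M=(1-\beta)(1-\delta^{\,k-1})>0$. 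Since $0\leq\lambda_-<1$, this forces $\lambda_+\leq1-(1-\beta)(1-\delta^{\,k-1})\leq1-\tfrac14(1-\tfrac1d)\leq\tfrac78$, uniformly in $D\geq2$, $k\geq2$, $n$; hence $\mathbb{E}\,\tr[\rho_A^2]\leq 2(7/8)^{\,n/k}$, which is exactly the input the first step needs.

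The main obstacle is this spectral step. Since $T_{\mathbb{F}}$ and $T_{\mathbb{1}}$ each have Perron eigenvalue exactly $1$, all of the cheap estimates are vacuous---the submultiplicative bound $\|M\|\leq\|T_{\mathbb{F}}\|^{k-1}\|T_{\mathbb{1}}\|$, as well as reading off the ``lowest‑energy'' spin configuration, both give $\lesssim1$---so the entire decay of $\tr[\rho_A^2]$ has to be extracted from the mismatch between the two Perron eigenvectors (quantified by $\beta<1$), and this must be made uniform in the bond dimension $D$ and in $k$. The surrounding bookkeeping---assembling the four‑copy Weingarten sum into the chain with the correct local weights, and handling the periodic boundary---is routine; the degenerate cases $D=1$ (product state, $S_2=0$) and $k=1$ (empty $A$) must of course be excluded. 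A sharper estimate would keep $\lambda_+\approx\beta+\delta^{\,k-1}\approx\tfrac1d+d^{-(k-1)}$ rather than $7/8$, upgrading the conclusion to $S_2(\rho_A')\gtrsim\tfrac{n}{k}\log d$, but the crude uniform bound already establishes the stated $\Omega(n/k)$ scaling.
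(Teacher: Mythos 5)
Your proposal is correct, and after setting up the same Weingarten/statistical-mechanics mapping it takes a genuinely different route in the key estimate. The paper groups the resulting periodic chain into blocks of $k-1$ plaquettes terminated by one ``green'' plaquette, bounds each of the four block matrix elements entrywise by $\eta(d,D)^{k-1}+\eta(D,d)$ (resp.\ $\eta(d,D)$), and resums binomially to get $\mathbb{E}\tr[\rho_A^2]\leq\big(\eta(d,D)^{k-1}+\eta(D,d)+\eta(d,D)\big)^{n/k}$; that base is below $1$ only when $D$ and/or $k$ are large enough (for $d=D=k=2$ it equals $1.2$), and the complementary regime $D^2\leq d$ is treated only loosely via the area law. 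You instead keep the exact transfer-matrix form $\mathbb{E}\tr[\rho_A^2]=\tr[(T_{\mathbb{F}}^{k-1}T_{\mathbb{1}})^{n/k}]$ --- I checked that your explicit $T_{\mathbb{F}},T_{\mathbb{1}}$ are precisely what summing out the lower permutations produces (each has spectrum $\{1,\eta(d,D)\}$ and the all-swap chain reproduces $\mathbb{E}\langle\psi|\psi\rangle^2=1+\eta(d,D)^n$, consistent with the paper's plaquettes up to a change of gauge) --- and you extract the decay from the mismatch of the two Perron eigenvectors: the identity $(1-\lambda_+)(1-\lambda_-)=(1-\beta)\,(1-\delta^{k-1})$ with $\beta=\frac{D^2+d}{dD^2+1}$ and $\delta=\eta(d,D)$ is correct (I verified $\tr[PT_{\mathbb{1}}]=\beta$ and $\det T_{\mathbb{F}}=\det T_{\mathbb{1}}=\delta$), and it yields $\lambda_+\leq 7/8$ uniformly for $d,D\geq2$, $k\geq 2$, hence $\mathbb{E}\tr[\rho_A^2]\leq 2(7/8)^{n/k}$; the finishing steps (Markov at threshold $r^{n/(2k)}$, union bound with Lemma~\ref{lemma:boundonnorm}) coincide with the paper's. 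The trade-off: the paper's entrywise bound, where it applies, gives the near-optimal rate $\approx d^{-n/k}$, i.e.\ $S_2\gtrsim\frac{n}{k}\log d$, whereas your spectral bound is cruder per period but uniform in $D$ and $k$, covering exactly the small-$D$, small-$k$ corner where the paper's combinatorial bound is vacuous (and, as you note, keeping $\lambda_+\approx\beta+\delta^{k-1}$ recovers the sharp rate too). Your caveats are also apt: $k\geq2$, $D\geq2$ (and $d\geq2$) must be assumed, as both proofs implicitly do.
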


\begin{proof}
The proof of Theorem~\ref{theorem:extensiverenyi} follows similar lines as the one of Lemma~\ref{lemma:overlap}.
We first show that the purity (\new{$\tr[\rho_A^2]$}) is almost minimal and then apply Markov's inequality.
We have \new{for the expected 
purity of a subsystem}
\new{\begin{align}
\begin{split}
\tr[\rho_A^2]&=\mathbb{E}_{\psi\sim\nu}\left(\tr\left[\rho_A^2\right]\right)\\
&=\mathbb{E}_{\psi\sim\nu}\left(\tr\left[\mathbb{F}_{A,\overline{A}}\rho^{\otimes 2}_A\right]\right)\\
&=\tr\left[\mathbb{F}_{A,\overline{A}}\tr_{B,\overline{B}}\left[\mathbb{E} \left(|\psi\rangle\langle\psi|\right)^{\otimes 2}\right]\right]\\
&=\tr\left[\mathbb{F}_{A,\overline{A}}\otimes \mathbb{1}_{B,\overline{B}}\mathbb{E} \left(|\psi\rangle\langle\psi|\right)^{\otimes 2}\right].
\end{split}
\end{align}}
In the same graphical notation as in the proof of Lemma~\ref{lemma:overlap}, this amounts to
\begin{multline}\label{eq:partitionrenyi}
\mathbb{E}\tr[\rho_A^2]=\sum_{\{\mathbb{1},\mathbb{F}\}^{2n}}\ipic{partitionfunctionrenyi}{0.42},
\end{multline}
here depicted for the subset $A$ corresponding to every $k=3$ spin.
Similar to the strategy laid out in Ref.~\cite{hunter-jones_unitary_2019}, we sum over every lower ball to obtain a statistical model. 
For this, we define the following two interactions. These are
\begin{equation}
\ipic[-0.2]{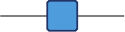}{0.7} :=\sum_{\mathbb{1},\mathbb{F}}\; \ipic{partitionfraction}{0.7},\qquad \ipic[-0.2]{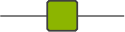}{0.7} :=\sum_{\mathbb{1},\mathbb{F}}\; \ipic{partitionfractiong}{0.7}.
\end{equation}
Graphically, this yields the following chain
\begin{multline}\label{eq:renyichain}
\mathbb{E}\tr[\rho_A^2]\\
=\sum_{\{\mathbb{1},\mathbb{F}\}^{n}} \ipic{partitionrenyisimple}{0.4}\,.
\end{multline}
\new{In order to proceed,} we compute
\begin{align}
\begin{split}\label{eq:plaquettes}
&\mathbb{F}\ipic[0]{plaquette1}{0.7} \mathbb{F}=1,\\ &\mathbb{1}\ipic[0]{plaquette1}{0.7} \mathbb{1}=\frac{dD^2-d}{D^2d^2-1}=\eta(d,D),\\
&\mathbb{1}\ipic[0]{plaquette1}{0.7} \mathbb{F}=0,\\ &\mathbb{F}\ipic[0]{plaquette1}{0.7} \mathbb{1}=\frac{d^2D-D}{D^2d^2-1}=\eta(D,d)
\end{split}
\end{align}
with the notation
\begin{equation}
\eta(x,y):=\frac{xy^2-x}{x^2y^2-1}.
\end{equation}
For the green plaquettes, we simply switch the roles of $\mathbb{1}$ and $\mathbb{F}$.
\new{As an intermediate step, let us compute the same but with $j$ blue plaquettes. The sum over the black balls is implicit, and the following is obtained by simply counting how many switches between $\mathbb 1$ and $\mathbb F$ are allowed.
\begin{align}
\begin{split}\label{eq:jblueplaquettes}
&\mathbb{F}\;\ipic[-0.3]{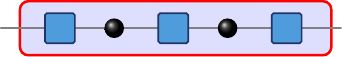}{0.7}\; \mathbb{F}=1,\\ &\mathbb{1}\;\ipic[-0.3]{renyiblocksblue}{0.7}\; \mathbb{1}=\eta(d,D)^j,\\
&\mathbb{1}\;\ipic[-0.3]{renyiblocksblue}{0.7}\; \mathbb{F}=0,\\ &\mathbb{F}\;\ipic[-0.3]{renyiblocksblue}{0.7}\; \mathbb{1}=\eta(D,d)\sum_{l=1}^j \eta(d,D)^{j-l}\\
&\textcolor{white}{aaaaa}=\eta(D,d)\frac{1-\eta(d,D)^j}{1-\eta(d,D)}
\end{split}
\end{align}
}

We can group the chain in~Eq.\ \eqref{eq:renyichain} into blocks of $\new{k-1}$ spins consisting of all spins \new{to the right of} a green plaquette except the one before the next \new{green plaquette}.
For $k\geq 2$ we have, \new{using equation \eqref{eq:jblueplaquettes} with $j=k-1$ combined with \eqref{eq:plaquettes} for the final green plaquette}
\begin{align}
\begin{split}\label{eq:renyiblocks}
&\mathbb{1}\;\ipic[-0.3]{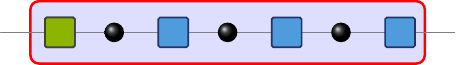}{0.5}\;\mathbb{1}\\
&\textcolor{white}{aaaaa}= 
\eta(d,D)^{\new{k-1}}+\eta(D,d)^2\frac{1-\eta(d,D)^{k-1}}{1-\eta(d,D)}\\
&\textcolor{white}{aaaaa}\leq \eta(d,D)^{\new{k-1}}+\eta(D,d),\\
&\mathbb{1}\;\ipic[-0.3]{renyiblocks}{0.5}\;\mathbb{F}\\
&\textcolor{white}{aaaaa}=\new{\eta(D,d)}\\
&\textcolor{white}{aaaaa}\leq \eta(d,D)^{\new{k-1}}+\eta(D,d),\\
&\mathbb{F}\;\ipic[-0.3]{renyiblocks}{0.5}\;\mathbb{1}\\
&\textcolor{white}{aaaaa}=\eta(d,D)\eta(D,d)\frac{1-\eta(d,D)^{k-1}}{1-\eta(d,D)}\\
&\textcolor{white}{aaaaa}\leq \eta(D,d)+\eta(d,D)^{k-1},\\
&\mathbb{F}\;\ipic[-0.3]{renyiblocks}{0.5}\;\mathbb{F}
=\eta(d,D).
\end{split}
\end{align}
\new{In this step, we have 
used that $\eta(x,y)\leq {1}/{x}$ and for $x\geq 2$
\begin{equation}\label{eq:etaddbound}
\frac{1}{1-\eta(x,y)}\leq 2.
\end{equation}}
We sum over the number of spins with value $\mathbb{1}$.
We make use of the fact that every spin with value $\mathbb{1}$ contributes a factor of $ \eta(d,D)^{\new{k-1}}+\eta(D,d)$ (via the plaquette to its right),
to find
\begin{align}
\begin{split}\label{eq:boundextensivity}
\mathbb{E}\tr[\rho_A^2]&\leq \sum_{i=0}^{\frac{n}{k}}{\frac{n}{k}\choose i}\left(\eta(d,D)^{\new{k-1}}+\eta(D,d)\right)^i\eta(d,D)^{\frac{n}{k}-i}\\
&=\left(\eta(d,D)^{\new{k-1}}+\eta(D,d)+\eta(d,D)\right)^{\frac{n}{k}}\\
&= \eta(d,D)^{\frac{n}{k}}\left(1+\eta(d,D)^{\new{k-2}}+\eta(D,d)\eta(d,D)^{-1}\right)^{\frac{n}{k}}.
\end{split}
\end{align}
Hence, for $d^{\frac12}\leq D$ and $k$ large enough, this expectation value becomes arbitrarily close to the minimal value $d^{-{n}/{k}}$.
 In the regime $D^2\leq d$, the $2$-Renyi entropy is bounded by the 
 entanglement area law \cite{AreaReview}
 \begin{equation}
 |\partial A|\log(D)=\frac{2n}{k}\log(D).
 \end{equation}
 As in the proof of Theorem~\ref{theorem:equilibration}, the expression for the expectation value in Eq.~(\ref{eq:boundextensivity}) 
 implies a concentration result via Markov's inequality and the union bound. 
 This concentration inequality extends to the R\'{e}nyi $2$-entropy. 
\end{proof}

\section{Maximum entropy for small connected subsystems}\label{sec:maximumentropy}

In this section, we will show that small connected
subsystems will with high probability feature
a close to maximum entropy.
A principle of maximum entropy for translation-invariant RMPS has proven in Refs.~\cite{collins2012matrix,gonzales_spectral_2018}.

\begin{theorem}[Almost maximum entropy for reduced states]
Let $A$ be a subset of $l$ of consecutive qudits, let $\rho_A'$ be the normalized density matrix of a RMPS drawn from $\mu_{d, n, D}$ reduced to $A$. Then for any real $r$
\begin{equation}
    \mathrm{Pr}\left(\tr[\rho_A'^2]\geq \Omega(D^{-r})+d^{-l}\right)\leq O\left(D^{-(2-r)}\right)
\end{equation}
for \begin{equation}
n\geq 2\frac{\log D}{\log d}+l.
\end{equation}
\end{theorem}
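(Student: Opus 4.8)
The plan is to run the machinery of the proof of Theorem~\ref{theorem:extensiverenyi}: bound the expected unnormalized purity $\mathbb{E}\tr[\rho_A^2]$ from above, turn this into a tail bound with Markov's inequality, and control the normalization $N:=\langle\psi|\psi\rangle$ with Lemma~\ref{lemma:boundonnorm}. Concretely, I would repeat the Weingarten reduction of that proof, but now with $A$ a \emph{block} of $l$ consecutive qudits. Summing over the lower permutation balls turns $\mathbb{E}\tr[\rho_A^2]$ into a trace over a ring of $2\times2$ transfer matrices on $S_2=\{\mathbb{1},\mathbb{F}\}$, with one ``blue'' plaquette $T_{\mathrm{blue}}$ for each of the $l$ kept sites and one ``green'' plaquette $T_{\mathrm{green}}$ for each of the $n-l$ traced-out sites, so that by cyclicity $\mathbb{E}\tr[\rho_A^2]=\tr\bigl[T_{\mathrm{green}}^{\,n-l}T_{\mathrm{blue}}^{\,l}\bigr]$. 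By Eq.~\eqref{eq:plaquettes}, $T_{\mathrm{blue}}$ is triangular with eigenvalues $\eta(d,D)$ and $1$ and a single off-diagonal entry $\eta(D,d)$, and $T_{\mathrm{green}}$ is its $\mathbb{1}\leftrightarrow\mathbb{F}$ mirror; since both are triangular their powers are explicit, and multiplying out gives
\begin{multline}\label{eq:maxentpurity}
\mathbb{E}\tr[\rho_A^2]=\eta(d,D)^{l}+\eta(d,D)^{n-l}\\
+\eta(D,d)^{2}\,\frac{\bigl(1-\eta(d,D)^{l}\bigr)\bigl(1-\eta(d,D)^{n-l}\bigr)}{\bigl(1-\eta(d,D)\bigr)^{2}}.
\end{multline}

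Next I would isolate the excess of the purity over its minimal value $d^{-l}$. The $l=0$ specialization of \eqref{eq:maxentpurity} computes $\mathbb{E}[N^2]=1+\eta(d,D)^{n}$ (the cross term drops out), so in particular $\mathbb{E}[N^2]\ge1$. Set $X:=\tr[\rho_A^2]-d^{-l}N^2$, which is nonnegative because $\tr[\rho_A'^2]\ge d^{-l}$ always. Using the elementary bounds already recorded in the proof of Theorem~\ref{theorem:extensiverenyi} --- $\eta(d,D)\le 1/d$, $\eta(D,d)\le 1/D$ and $1/(1-\eta(d,D))\le2$ --- together with the hypothesis $n\ge 2\log D/\log d+l$, which is exactly $d^{-(n-l)}\le D^{-2}$, the terms of \eqref{eq:maxentpurity} obey $\eta(d,D)^{l}\le d^{-l}$, $\eta(d,D)^{n-l}\le D^{-2}$, and the cross term $\le4D^{-2}$. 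Hence $\mathbb{E}[X]=\mathbb{E}\tr[\rho_A^2]-d^{-l}\mathbb{E}[N^2]\le\bigl(\eta(d,D)^{l}-d^{-l}\bigr)+\eta(d,D)^{n-l}+4D^{-2}\le 5D^{-2}=O(D^{-2})$, the first bracket being nonpositive.

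Finally I would combine Markov's inequality with the norm concentration, exactly as at the end of the proof of Theorem~\ref{theorem:equilibration}. Markov gives $\mathrm{Pr}(X\ge t)\le O(D^{-2})/t$. By Lemma~\ref{lemma:boundonnorm} with $\varepsilon=1/2$, $\mathrm{Pr}(N<1/2)\le4d^{-n}\le4D^{-2}$, and on the complementary event $\tr[\rho_A'^2]-d^{-l}=X/N^2\le4X$; a union bound then yields $\mathrm{Pr}\bigl(\tr[\rho_A'^2]\ge4t+d^{-l}\bigr)\le O(D^{-2})/t+4D^{-2}$. Taking $t=D^{-r}$ gives the stated bound $\mathrm{Pr}\bigl(\tr[\rho_A'^2]\ge\Omega(D^{-r})+d^{-l}\bigr)\le O(D^{-(2-r)})$, which is vacuous for $r\ge2$; if one prefers the statement phrased for the R\'{e}nyi entropy $S_2(\rho_A')=-\log\tr[\rho_A'^2]$, one simply inverts the logarithm on the same high-probability event.

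The transfer-matrix bookkeeping and the $\eta$-estimates are routine and are essentially transcribed from the proof of Theorem~\ref{theorem:extensiverenyi}; I expect the main obstacle to be the second step. One must subtract $d^{-l}\mathbb{E}[N^2]$ rather than merely $d^{-l}$, using the $l=0$ case of \eqref{eq:maxentpurity} so that the two $d^{-l}$ contributions cancel \emph{exactly} and leave a remainder of order $D^{-2}$ instead of $d^{-l}$; without this exact cancellation, Markov's inequality would only control $\tr[\rho_A'^2]$ down to the scale $d^{-l}$ and the statement would be empty for a connected region. The only other place where the passage from the unnormalized to the normalized reduced state enters is the control of the $N^{-2}$ factor on a high-probability event, which is handled by Lemma~\ref{lemma:boundonnorm}.
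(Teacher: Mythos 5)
Your proposal is correct and follows essentially the same route as the paper: the same Weingarten/transfer-matrix evaluation of $\mathbb{E}\tr[\rho_A^2]$ for a connected block, the same exact cancellation obtained by subtracting $d^{-l}N^2$ with $\mathbb{E}[N^2]=1+\eta(d,D)^n$, and the same Markov-plus-union-bound treatment of the normalization via Lemma~\ref{lemma:boundonnorm}. The only (harmless) deviations are cosmetic: you fix $\varepsilon=1/2$ in the norm-concentration step where the paper uses an $n$-dependent $\varepsilon$, and your closed form for the purity keeps $\eta(d,D)$ inside the geometric-series factors, consistent with the paper's $j$-plaquette formula; both yield the same $O(D^{-2})$ bound.
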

\begin{proof}
Let $\rho_A$ be the unnormalized 
quantum state. In the disordered case we consider here, we obtain analogously to the previous section
the expression
\begin{multline}
	\mathbb{E} \tr[\rho_A^2]\\
	=\sum_{\{\mathbb{1},\mathbb{F}\}^n}\ipic{partitionconnectedrenyi}{0.4}
\end{multline}
for the subset $A$ consisting of $l$ consecutive qudits.
Summing over the four contributions \new{in~\eqref{eq:jblueplaquettes}, we obtain}
\new{\begin{align}\label{eq:entropyexpectationbound}
\begin{split}
\mathbb{E} \tr[\rho_A^2] &= \eta(d,D)^{n-l}\\
&+\eta(d,D)^l\\
&+0\\
&+\eta(D,d)^2\frac{1-\eta(D,d)^{\new{l}}}{1-\eta(D,d)}\frac{1-\eta(D,d)^{n-l}}{1-\eta(D,d)}.\\
\end{split}
\end{align}}
By equation \eqref{eq:etaddbound} for $d,D\geq 2$,
\new{\begin{equation}
    \mathbb{E} \tr[\rho_A^2]\leq \frac{1}{d^l}+\frac{1}{d^{n-l}}+4\frac{1}{D^2}
\end{equation}}
holds true.
Let $N=\tr[\rho]$ be the norm squared of the MPS. We have 
\begin{equation}
\tr[\rho_A^2]\geq N^2\frac{1}{d^l} 
\end{equation}
and using (see Eq.\  \eqref{eq:expectedvaluenorm} in the proof of Lemma \ref{lemma:boundonnorm})
\begin{equation}
    \mathbb E(N^2)=1+\eta(d,D)^n\geq 1
\end{equation}
we find
\begin{equation}
\begin{aligned}
    \mathbb E(\tr[\rho_A^2]- N^2d^{-l})&\leq  \new{4}\frac{1}{D^2}+\frac{1}{d^{n-l}}+\new{\frac{1}{d^{n+l}}}\\&\leq \new{6}\frac{1}{D^2},
    \end{aligned}
\end{equation}
where we have used 
\begin{equation}
n\geq 2\frac{\log D}{\log d}+l,\,\text{i.e.}\;d^{l-n}\leq \frac{1}{D^2}.
\end{equation}
Markov's inequality then yields
\begin{equation}
    \mathrm{Pr}( \tr[\rho_A^2]-N^2d^{-l}\geq D^{-r})\leq \new{\frac16} D^{-(2-r)}.
\end{equation}
We now need to ensure that the normalization of the state does not worsen the bound. We employ the union bound in a similar manner as in the proof of Theorem \ref{theorem:equilibration}. Let $\rho_A'={\rho_A}/{N}$ be the normalized state. Suppose that 
\begin{equation}    
    \tr[\rho_A^2]-N^2d^{-l}\leq D^{-r}
\end{equation}
and $|N-1|\leq \epsilon$, 
then $N^2\geq (1-\epsilon)^2$ and
\begin{equation}
    \tr[\rho_A'^2]-d^{-l}\leq \frac{D^{-r}}{
    (1-\epsilon)^2}.
\end{equation}
Pick $\epsilon:=td^{-kn}$, with $k\leq 1-r/4$ and $t<1$. Then $\epsilon \leq t\,d^{kl}D^{-2k}$ and 
\begin{equation}
     \tr[\rho_A'^2]-d^{-l}\leq 
     O(D^{-r}).
\end{equation}
By the same union bound argument as in the proof of Theorem \ref{theorem:equilibration},
we get
\begin{equation}
\begin{aligned}
    &\mathrm{Pr}( \tr[\rho_A']-d^{-l}\leq O(D^{-r}))\\&\geq  \mathrm{Pr}( \tr[\rho_A]-N^2d^{-l}\leq O(D^{-r}))- \mathrm{Pr}(|N-1|\leq d^{-kn}) \\&\geq 1- O(D^{-(2-r)})-d^{-(1-2k)n}\geq 1- O\left(D^{-(2-r)}\right),
    \end{aligned}
\end{equation}
where we have used
that 
$k\leq 1-r/4$ and hence $d^{(1-2k)n}\geq D^{2-r}$.
This concludes the proof.
\end{proof}

The bound on the expectation value \eqref{eq:entropyexpectationbound} is of the form
\begin{equation}
\mathbb{E} \tr[\rho_A^2]\leq d^{-l}+O(D^{-2})+O(d^{l-n}).
\end{equation}
In comparison, a bound for the expectation value of the form $d^{-l}+O(D^{-1/10})$ for $D\geq n^5$ has been obtained in Ref.~\cite{collins2012matrix} for random translation-invariant MPS.
For periodic boundary conditions, a similar result has been proven in Ref.~\cite{gonzales_spectral_2018}.
Perhaps not surprisingly, our bound in the disordered case scales slightly better in $D$. Moreover, the bond dimension
$D$ is not required to grow with the system size $n$.

In Ref.~\cite{collins2012matrix}, 
Levy's Lemma has been employed in order to obtain an exponential concentration bound. This is not possible in our case, as the Lipschitz constant of the purity is necessarily lower bounded by $O(D)$. As a matter of fact, consider the MPS generated by choosing 
\begin{equation}
U^{(k)}:=\mathbb 1_d\otimes \mathbb 1_D 
\end{equation}
for all $k$, i.e., $\rho=D^2|0\rangle\langle 0|^{\otimes n}$. Then for any $A$, $\tr[\rho_A^2]=D^2$. Now pick a site $k$ and a unitary 
\begin{equation}
U:=\mathbb{1}_d\otimes V
\end{equation}
with 
$\mathrm{Tr}(V)=0$.
 For example, we can choose \begin{equation}
 V:=\mathrm{diag}(e^{2\pi \mathrm{i}j/D})_{j=0,\dots ,D-1}.
\end{equation}
We construct the MPS $\sigma$ with the identity on every site and $U$ on site $k$.
 For any $A$ containing $k$, $\tr[\sigma_A^{2}]=0$. 
 The Lipschitz constant of the function
 \begin{equation}
\begin{aligned}
    U(dD)^{\times n}\to \mathbb R,\\
    (U)\mapsto \tr[{\rho^U_{A}}^2]
    \end{aligned}
\end{equation} 
where $\rho^U$ is the MPS constructed with the unitaries and where we equip  $U(dD)^{\times n}$ with the product Frobenius norm is then bounded by
\begin{equation}
    L\geq \frac{|\tr[\rho_A^{2}]-\tr[\sigma_A^{2}]|}{||\mathbb 1- U||_2}\geq \frac{D^2}{2D}=O(D),
\end{equation}
where we  have made use of 
the triangle inequality to further 
bound the denominator.

\section{Concentration around unit norm}\label{sec:normconcentration}

In this section, we will make use 
the machinery of statistical mechanics mappings in order to show that the norm of the vectors $|\psi\rangle$ is exponentially concentrated around unity, i.e. around being an actual quantum state.
\begingroup
\def\thelemma{\ref{lemma:boundonnorm}}
\begin{lemma}[Concentration 
around a unit norm]
 \begin{equation}
 \mathrm{Pr}\left(|\langle\psi|\psi\rangle-1|\geq \varepsilon\right)\leq \varepsilon^{-2} d^{-n}.
 \end{equation}
 \end{lemma}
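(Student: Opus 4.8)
The plan is to bound the second moment of the deviation $N-1$, where $N:=\langle\psi|\psi\rangle$, and then invoke Chebyshev's inequality; so the two quantities to compute are $\mathbb{E}[N]$ and $\mathbb{E}[N^2]$, both via the same Weingarten-calculus / statistical-mechanics mapping used in the proof of Lemma~\ref{lemma:overlap}.

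For $\mathbb{E}[N]$ I would use the Weingarten formula \eqref{eq:weingarten} at $t=1$, i.e.\ replace each $\mathbb{E}_{U^{(i)}}U^{(i)}\otimes\overline{U}^{(i)}$ by $(dD)^{-1}|\mathbb{1}\rangle\langle\mathbb{1}|$ and contract the resulting network around the ring. The physical input legs give trivial factors $\langle 0|0\rangle=1$, the physical output legs of each $|\mathbb{1}\rangle$ close up to a factor $d$ per site, and the $D$-dimensional bond legs close into a single loop of weight $D^{n}$; combined with the prefactor $(dD)^{-n}$ and the normalization of the boundary vectors in \eqref{eq:stateUprep} this yields $\mathbb{E}[N]=(dD)^{-n}d^{\,n}D^{\,n}=1$.

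For $\mathbb{E}[N^2]$ I would write $N^2=\tr[(|\psi\rangle\langle\psi|)^{\otimes 2}]$ and run the construction leading to \eqref{eq:partitionfunction}, the only difference being that the external legs are now contracted into a trace rather than against $|\phi\rangle^{\otimes 2,2}$. Evaluating each site with \eqref{eq:graphweingarten} at $q=dD$ turns $\mathbb{E}[N^2]$ into a one-dimensional ring partition function with an $S_2=\{\mathbb{1},\mathbb{F}\}$-valued variable per site: the physical output contractions contribute $d^{\,c(\sigma_i)}$ with $c(\mathbb{1})=2,\ c(\mathbb{F})=1$, the physical input contractions contribute $1$ since $\langle\pi_i|0\rangle^{\otimes 2,2}=1$, the bond contractions contribute $D^{\,c(\pi_i^{-1}\sigma_{i-1})}$, and the weights $\mathrm{Wg}(\sigma_i^{-1}\pi_i,dD)$ tie the two balls at each site together. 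Summing out the lower balls produces a $2\times 2$ transfer matrix on the $\sigma_i$, and I would then read off its spectrum: its eigenvalues turn out to be $1$ and $\eta(d,D)=(dD^2-d)/(d^2D^2-1)$, so that
\begin{equation}\label{eq:expectedvaluenorm}
\mathbb{E}_{\psi\sim\mu_{d,n,D}}N^2=1+\eta(d,D)^{n}.
\end{equation}

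Finally I would combine the two moments: $\mathbb{E}[(N-1)^2]=\mathbb{E}[N^2]-2\mathbb{E}[N]+1=\eta(d,D)^{n}$, and use the elementary identity
\begin{equation}
\frac1d-\eta(d,D)=\frac{d^2-1}{d(d^2D^2-1)}\geq 0
\end{equation}
to conclude $\eta(d,D)\leq 1/d$, hence $\mathbb{E}[(N-1)^2]\leq d^{-n}$; Chebyshev's inequality then gives $\mathrm{Pr}(|\langle\psi|\psi\rangle-1|\geq\varepsilon)\leq\varepsilon^{-2}d^{-n}$. The only genuinely computational step is extracting \eqref{eq:expectedvaluenorm} from the ring — assembling the $2\times 2$ transfer matrix from the Weingarten weights and the contraction values $\langle\mathbb{1}|\mathbb{1}\rangle=d^{2}$, $\langle\mathbb{1}|\mathbb{F}\rangle=d$ on $\mathbb{C}^{d}$ (and $D^{2}$, $D$ on $\mathbb{C}^{D}$) and diagonalizing it; the probabilistic part is then immediate.
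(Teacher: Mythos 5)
Your proposal is correct and follows essentially the same route as the paper: first-order Weingarten calculus gives $\mathbb{E}\langle\psi|\psi\rangle=1$, the statistical-mechanics mapping of the second moment on the ring gives $\mathbb{E}\langle\psi|\psi\rangle^2=1+\eta(d,D)^n\leq 1+d^{-n}$, and Chebyshev/Markov finishes the argument. The only cosmetic difference is that you phrase the ring sum as diagonalizing a $2\times 2$ transfer matrix, while the paper simply observes that the vanishing $\mathbb{1}\!\to\!\mathbb{F}$ plaquette forces all-$\mathbb{1}$ or all-$\mathbb{F}$ configurations -- the same computation, since that zero entry makes the transfer matrix triangular with eigenvalues $1$ and $\eta(d,D)$.
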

 \addtocounter{lemma}{-1}
\endgroup

To prove this statement, first notice that
\begin{equation}
\mathbb{E}\langle\psi|\psi\rangle^2=\tr[\mathbb{F}(|\psi\rangle\langle\psi|)^{\otimes 2}].
\end{equation}
Moreover, using first order Weingarten calculus it is easy to see that $\mathbb{E}\langle\psi|\psi\rangle=1$.
Graphically, this corresponds to the spin chain depicted in Eq.~\eqref{eq:partitionrenyi} with only blue plaquettes as
\begin{multline}
\mathbb{E}\langle\psi|\psi\rangle^2\\
=\sum_{\{\mathbb{1},\mathbb{F}\}^n}\ipic{partitionnorm}{0.44}.
\end{multline}
However, given the values of the plaquettes in~\eqref{eq:plaquettes}, the only non-zero contributions are all spins $\mathbb{1}$ or all spins $\mathbb{F}$.
This yields
\begin{equation}\label{eq:expectedvaluenorm}
\mathbb{E}\langle\psi|\psi\rangle^2=1+\left(\frac{dD^2-d}{D^2d^2-1}\right)^n\leq 1+ d^{-n}.
\end{equation}
Now we obtain from Markov's inequality
\begin{equation}
\mathrm{Pr}\left((\langle\psi|\psi\rangle-1)^2\geq\varepsilon\right)\leq \varepsilon^{-1}\mathbb{E}(1-\langle\psi|\psi\rangle)^2\leq \varepsilon^{-1} d^{-n}.
\end{equation}
This exponential concentration allows us to prove all other concentration results without regarding the normalization.
In fact, we combine every calculation with a union bound and the above concentration result such that the normalization does not change the statements. 
It is also noteworthy that all these concentration results follow from simple applications of Markov's inequality and do not require geometric methods such as Levy's Lemma ~\cite{ledoux2001concentration}.

\section{Local expectation values}
Consider an observable $O$ acting on $(\mathbb{C}^{d})^{\otimes l}$ for some small number of sites $l$.
Here we consider the expectation values $\langle\psi| O\otimes \mathbb{1}_{(\mathbb{C}^2)^{\otimes n-l}}|\psi\rangle$.
We assume w.l.o.g. that $O$ is traceless.
First, note that we have 
\begin{equation}
\mathbb{E}\langle\psi| O\otimes \mathbb{1}_{(\mathbb{C}^2)^{\otimes n-l}}|\psi\rangle=\tr(O)=0.
\end{equation}
We would like to show that the expectation values are concentrated around $0$. 
One possible way to do this is to exploit results 
of Section~\ref{sec:maximumentropy} 
to argue that the local density matrix is close in norm with high probability to the maximally mixed state, as done in \cite{collins2012matrix}, and conclude by Hölder's inequality that for the expectation value expressed in terms of the reduced density matrix $\rho$ the following holds:
\begin{equation}
\begin{aligned}
    |\tr[O\rho]|&=|\tr[O(\rho-\mathbb{1}/d^{l})]|\leq ||\rho-\mathbb{1}/d^{l}||_{\infty} ||O||_1,
\end{aligned}
\end{equation}
where $||O||_1$ is a constant in $n$ and $D$.
Nevertheless, we would like to showcase how our method can be applied directly to this problem. 

To showcase the flexibility of our approach, we provide a direct bound on the second moment
\begin{equation}
\mathbb{E}\langle\psi| O\otimes \mathbb{1}_{(\mathbb{C}^2)^{\otimes n-l}}|\psi\rangle^2=\mathbb{E}\tr[(O\otimes \mathbb{1}_{(\mathbb{C}^2)^{\otimes n-l}}|\psi\rangle\langle\psi|)^{\otimes 2}].
\end{equation}

We consider the case $l=1$ for simplicity.
Graphically, we have
\begin{multline}
\mathbb{E}\langle\psi| O|\psi\rangle^2\\
=\ipic{partitionexpectation}{0.45},
\end{multline}
where $|O\rangle:=[(\mathbb{1}\otimes O)|\Omega\rangle]^{\otimes 2}$.
We obtain the  interactions
\begin{align}
\begin{split}
&\mathbb{1}\;\ipic[-0.7]{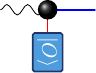}{0.7}\,\mathbb{1}=-\frac{\tr[O^2]}{D^2d^3-d},\\ &\mathbb{F}\;\ipic[-0.7]{partitionfractionO}{0.7}\;\mathbb{F}=\frac{\tr[O^2]D^2}{D^2d^2-1},\\
&\mathbb{1}\;\ipic[-0.7]{partitionfractionO}{0.7}\,\mathbb{F}=-\frac{\tr[O^2]D}{D^2d^3-d},\\ &\mathbb{F}\;\ipic[-0.7]{partitionfractionO}{0.7}\;\mathbb{1}=\frac{\tr[O^2]D}{D^2d^2-1}
\end{split}
\end{align}
from this.
Simply by ignoring the one negative contribution (all spins $\mathbb{1}$), we obtain 
\begin{align}
\begin{split}
&\mathbb{E}\langle\psi| O\otimes \mathbb{1}_{(\mathbb{C}^2)^{\otimes n-l}}|\psi\rangle^2\\
&\textcolor{white}{aaaa}\leq \tr[O^2]\left(\frac{D}{D^2d^2-1}\new{\eta(D,d)}\sum_{i=\new{0}}^{\new{n-2}}\new{\eta(d,D)^i}\right.\\
&\textcolor{white}{aaaa}\textcolor{white}{\leq}\left.+\frac{D^2}{D^2d^2-1}\left(\frac{d^2D-D}{D^2d^2-D}\right)^{n-2}\right)\\
&\textcolor{white}{aaaa}\leq \tr[O^2]\left(D^{-2}\new{\frac{1-d^{-n+1}}{1-d^{-1}}}+D^{-n+1}\right)\\
&\textcolor{white}{aaaa}\leq 2D^{-2}\tr[O^2]
\end{split}
\end{align}
for $n\geq 2$ as a combinatorial bound.
We can now achieve a concentration result for local expectation values from Markov's inequality.
Previously, typicality of expectation values was argued for translation-invariant MPS in Ref.~\cite{garnerone2010typicality}, where the authors argue an exponential concentration if $D$ grows faster than $\Omega(n^2)$. 
However, their proof is based on the concentration of measure phenomenon and the bound they obtain on the Lipshitz constant involves the assumption that the second highest eigenvalue of the transfer operator is small enough for essentially all instances. While a numerical assessment 
indicates that this is the case for all practical purposes, it is difficult to obtain explicit rigorous bounds for the dependence on $n$ this way.
Our simple bound from second moments only scales as $D^{-2}$ (as opposed to $e^{-\Omega(D)}$) but it is independent of $n$ and does not require any further assumptions. 

\section{Outlook}


In this work, we have \new{systematically explored} random matrix product states (RMPS) were the individual tensors have been chosen independently according to the Haar measure. Exploiting a mapping to a 
one-dimensional statistical mechanics model, we have
\new{been in the position to} compute 
expectation values of various quantities 
for such disordered RMPS.
We \new{have} derived concentration results for the effective dimension implying equilibration under the time evolution of generic Hamiltonians, and concentration results for the R\'{e}nyi $2$-entropy and the expectation values of local observables.

While we have put properties of such families of quantum states into the focus of our analysis, it should be clear that by means of the parent Hamiltonian concept 
mentioned above, we could have discussed ensembles of
local Hamiltonians. For translational-invariant RMPS 
\cite{Lancien},
this picture is particularly transparent.

An obvious further problem is to consider quantities that require higher moments such as higher R\'{e}nyi entropies $S_{\alpha}$ with $2\leq \alpha\in\mathbb{Z}$.
This would result in more complicated statistical models with $\alpha!$ many local degrees of freedom.
Another open question concerns higher dimensional systems.
It would be interesting to apply a similar analysis to projected entangled pair states.

For  translational-invariant RMPS, the expected correlation length has been 
proven in Refs.~\cite{gonzales_spectral_2018}.
The family of states defined in this fashion gives rise to
generic representatives of the \emph{trivial phase of
matter} \cite{1010.3732} with unit probability.
If a state has symmetries, i.e., if the state vector
satisfies
    $U_g^{\otimes n} |\psi\rangle = 
    e^{i \theta_g}|\psi\rangle
$
for some real phase $\theta_g$, and where 
$g\mapsto U_g$
is a linear unitary representation of a symmetry group $G$, then this symmetry is (for a suitable phase gauge) 
reflected on the virtual level of the 
MPS as a projective unitary representation of group $G$,
satisfying
\begin{equation}
    V_g V_h = e^{i\omega(g,h)}V_{gh}
,
\end{equation}
for a real phase $(g,h)\mapsto \omega(g,h)$
\cite{PhysRevA.79.042308}.
Different \emph{phases of matter} respecting 
these symmetries in symmetry protected topological
order are now captured by equivalence classes,
called cohomology classes, they again
forming a group, the second cohomology
group $H_2(G,U(1))$ of $G$ over $U(1)$ 
\cite{1010.3732,PhysRevB.81.064439,PhysRevB.83.035107}. 
That said, 
it now makes sense to think of RMPS that respect a physical symmetry and think of \emph{typical symmetry protected topological (SPT)
phases of matter}: Here, the Haar measure is chosen in each of the blocks of a 
direct sum on the virtual level, respecting the 
projective unitary representation of group $G$. In this sense, one can speak of common representatives
of SPT phases, a line of thought that will be elaborated upon
elsewhere.

More broadly put, this work can be seen as a contribution
to a bigger program concerned with 
\new{understanding generic phases of quantum matter by means of}
\emph{random tensor
networks}. Indeed, properties of random tensor
networks can often be easier computed than those of tensor networks in which the entries are specifically chosen:
Randomness hence serves as a computational tool,
a line of thought that can be dated back to 
Ref.\ \cite{hayden2006aspects} and further.

In Ref.\ \cite{Hayden2016}, such a line of thought has already
been applied to identify properties of holographic
tensor networks, where a desirable property to be a
so-called perfect tensor turns out to the 
approximately satisfied with high probability. Building
upon this insight, one can compute properties
of the resulting boundary state. More ambitiously
still, it makes a lot of sense to think of 
holographic random tensor network models in which
the tensors have further structure, e.g., to be
match-gate tensor networks \cite{Jahn}. Similarly,
questions of properties of higher-dimensional cubic
tensor networks arise along similar lines. It is the 
hope that the present work can provide insights and a
\new{powerful} machinery to address such further questions
when exploring typical instances of phases of matter.\\

\subsection*{Acknowledgements}
We would like to warmly thank Alexander Altland, Alex Goe\ss mann, Dominik Hangleiter, Nick Hunter-Jones, Richard Kueng, Amin Thainat, and Carolin Wille for fruitful discussions. 
We would like to thank the DFG 
(CRC 183, project A03,
FOR 2724, EI 519/15-1, EI 519/17-1) and the FQXi for support.

\section{Appendix}
A natural follow up question is whether RMPS have features similar to generic Haar-random states.
Naively, an argument as above might be used to show that RMPSs form approximate spherical $2$-designs \cite{Designs}.
The difference of moment operators in 2-norm can be bounded using the frame potential as
\begin{equation}
\left|\left|\mathbb{E}_{\psi\sim\nu}(|\psi\rangle\langle\psi|)^{\otimes 2}-\frac{2 P_{\mathrm{sym}}}{d^n(d^n+1)}\right|\right|^2_F=\mathcal{F}_{2,\nu}-\frac{2}{d^n(d^n-1)},
\end{equation}
and the frame potential is given by
\begin{equation}
\mathcal{F}_{2,\nu}:=\mathbb{E}_{\psi,\phi\sim\nu}|\langle \psi|\phi\rangle|^4,
\end{equation}
which is reminiscent of the expression in~\eqref{eq:overlapbound}.
After all, random product states constitute an exact projective $1$-design.

However, RMPS with polynomially bounded bond dimensions have low-entanglement structure by definition.
Since the Schmidt rank along any bi-partition is bounded by $D$, the purity is bounded below by $1/D$ but the average over this quantity is exponentially small for an approximate $2$-design.
In more detail, consider a probability measure such that
\begin{equation}
\left|\left|\mathbb{E}_{\psi\sim\nu} (|\psi\rangle\langle\psi|)^{\otimes 2}-\frac{2 P_{\mathrm{sym}}}{d^n(d^n+1)}\right|\right|_{F}\leq \varepsilon\,.
\end{equation}
With this notion we obtain a straightforward bound on the entanglement purity over a bi-partition of the spin chain into subsets $A$ and $B$ of equal size $n/2$.
With the norm inequality $||\cdot||_1\leq \sqrt{\dim\mathcal{H}}||\cdot||_F$ (see, e.g., Ref.~\cite[Eq.~(1.2.6)]{low_pseudo-randomness_2010}), we obtain
\begin{align}
\begin{split}
&\mathbb{E}_{\psi\sim\nu}\left(\tr\left[\rho_A^2\right]\right)\\
&=\mathbb{E}_{\psi\sim\nu}\left(\tr\left[\mathbb{F}_{A,\overline{A}}\rho^{\otimes 2}_A\right]\right)\\
&=\tr\left[\mathbb{F}_{A,\overline{A}}\tr_{B,\overline{B}}\left[\mathbb{E}_{\psi\sim\nu} \left(|\psi\rangle\langle\psi|\right)^{\otimes 2}\right]\right].
\end{split}
\end{align}
Therefore
\begin{align}
\begin{split}
&\mathbb{E}_{\psi\sim\nu}\left(\tr\left[\rho_A^2\right]\right)\\
&\leq\frac{2}{d^n(d^n+1)}\tr\left[\mathbb{F}_{A,\overline{A}}\tr_{B,\overline{B}}\left[P_{\mathrm{sym}}\right]\right]+d^n\varepsilon\\
&=\frac{1}{d^n(d^n+1)}\tr\left[\mathbb{F}_{A,\overline{A}}\tr_{B,\overline{B}}\left[\mathbb{F}_{AB,\overline{AB}}+\mathbb{1}_{AB,\overline{AB}}\right]\right]+d^n\varepsilon\\
&=\frac{2 d^{3n/2}}{d^n(d^n+1)}+d^n\varepsilon\,.
\end{split}
\end{align}
As a small expectation value implies the existence of instances with small values, this leads to a contradiction for 
\begin{equation}
    \varepsilon\leq \frac{d^{-n}}{D}.
\end{equation}
This argument rules out a ``cut-off'' phenomenon, where the error 
is exponentially suppressed
in the bond dimension $D$ only after some polynomial threshold has been
surpassed.

\bibliographystyle{plain}
\bibliography{Bib/BigReferences37}

\end{document}